\numberwithin{equation}{section}
\newcommand{\nex}{\ell}
\begin{document}

\begin{center}{\Large \textbf{Spin conductivity of the XXZ chain\\*[0.1cm]
in the antiferromagnetic massive regime}}\end{center}

\begin{center}
Frank G\"ohmann,\textsuperscript{1}
Karol K. Kozlowski,\textsuperscript{2}
Jesko Sirker,\textsuperscript{3}
Junji Suzuki\textsuperscript{4}
\end{center}

\begin{center}
{\bf 1} Fakult\"at f\"ur Mathematik und Naturwissenschaften, Bergische Universit\"at Wuppertal, 42097 Wuppertal, Germany
\\
{\bf 2} Univ Lyon, ENS de Lyon, Univ Claude Bernard, CNRS, Laboratoire de Physique, F-69342 Lyon, France
\\
{\bf 3} Department of Physics and Astronomy and Manitoba Quantum Institute, University of Manitoba, Winnipeg R3T 2N2, Canada
\\
{\bf 4} Department of Physics, Faculty of Science, Shizuoka University, Ohya 836, Suruga, Shizuoka, Japan
\end{center}

\begin{center}
\today
\end{center}


\section*{Abstract}
We present a series representation for the dynamical two-point 
function of the local spin current for the XXZ chain in the
antiferromagnetic massive regime at zero temperature. From this
series we can compute the correlation function with very high
accuracy up to very long times and large distances. Each term in
the series corresponds to the contribution of all scattering
states of an even number of excitations. These excitations can
be interpreted in terms of an equal number of particles and holes.
The lowest term in the series comprises all scattering states
of one hole and one particle. This term determines the long-time
large-distance asymptotic behaviour which can be obtained
explicitly from a saddle-point analysis. The space-time
Fourier transform of the two-point function of currents
at zero momentum gives the optical spin conductivity of
the model. We obtain highly accurate numerical estimates for
this quantity by numerically Fourier transforming our data.
For the one-particle, one-hole contribution, equivalently
interpreted as a two-spinon contribution, we obtain an exact
and explicit expression in terms of known special functions.
For large enough anisotropy, the two-spinon contribution carries
most of the spectral weight, as can be seen by calculating
the f-sum rule.

\vspace{10pt}
\noindent\rule{\textwidth}{1pt}
\tableofcontents\thispagestyle{fancy}
\noindent\rule{\textwidth}{1pt}
\vspace{10pt}

\section{Introduction}
\label{Intro}
Transport phenomena in spatially one-dimensional quantum
systems are an active area of research, both theoretically
and experimentally \cite{BHKPSZ21,NDMP22,SPA11,Sirker20,
Hlubeketal10,FKECSHBSGGBK13, FSEHCBG13,HFSZKDBG14,Jepsenetal20,
Weietal21pp}.
Some of the more prominent one-dimensional models are
integrable and therefore amenable to an exact treatment.
In linear response, their transport properties
are determined by the dynamical correlation functions of
current densities. In this work we shall focus on the
XXZ spin-1/2 chain with Hamiltonian
\begin{equation} \label{hxxz}
     H = J \sum_{j = 1}^L \Bigl\{ \s_{j-1}^x \s_j^x + \s_{j-1}^y \s_j^y
               + \D \bigl( \s_{j-1}^z \s_j^z - 1 \bigr) \Bigr\}
	       - \frac{h}{2} \sum_{j=1}^L \s_j^z \epc
\end{equation}
where the $\s^\a \in \End {\mathbb C}^2$, $\a = x, y, z$, are
Pauli matrices. The three real parameters of the Hamiltonian are
the anisotropy $\D$, the exchange interaction $J > 0$, and the
strength $h > 0$ of an external longitudinal magnetic field.

The basic quantities that can be transported in the XXZ chain
are heat and magnetization. The total heat current of the
XXZ chain is a conserved quantity. This implies that the
corresponding thermal conductivity is purely ballistic and
is determined entirely by a thermal Drude weight that can be
calculated exactly at any temperature for any value of $\D$
and $h$ \cite{KlSa02,SaKl03,Zotos17}. Technically, the thermal
Drude weight can be inferred from the spectral properties of
a properly defined quantum transfer matrix \cite{Suzuki85,%
SAW90,Kluemper93,KlSa02}.

For the current of the magnetization the situation
is different. The total spin current is not conserved, except in
the free Fermion case $\D = 0$. Still, it may have a ballistic
contribution. In that case the corresponding conductivity
consists of a singular `dc part' quantified by a Drude weight
and a regular $\om$-dependent `ac part', where $\om$ is the
frequency. There is a vast body of literature on the numerical
calculation of both, the singular and the regular contribution,
at finite temperature $T > 0$ (for an overview see the recent
review article \cite{BHKPSZ21}). On the analytical side, the
$T=0$ Drude weight in the critical regime of the XXZ chain is
known \cite{ShSu90}. Results for the Drude weight at
finite $T$ and on the leading asymptotic behaviour of the regular
part for $\om \rightarrow 0$ are also available. In particular,
exact lower bounds, based on the Mazur inequality \cite{Mazur69},
were established in both cases \cite{ZNP97,Prosen11,%
PereiraPasquier,ProsenIlievski,SPA11, MKP17,IDMP18}. For the
Drude weight in the regime $-1<\Delta<1$ at magnetic field $h=0$
it has been argued that the Mazur bound obtained by taking all
known families of conserved charges into account is tight. This
bound, furthermore, does agree with earlier results for the
Drude weight based on an extension of the thermodynamic Bethe
ansatz \cite{FujimotoKawakami,Zotos99,BFKS05}. In the latter
case, the input invoked from the Bethe Ansatz solution of the
XXZ chain \cite{Orbach58,BVV83} enters in the form of
the string hypothesis \cite{Gaudin71,TaSu72}, whose applicability
is not established beyond the calculation of thermodynamic
quantities, where its use is equivalent to the use of fusion hierarchies
\cite{KSS98,TSK01}. For the low-energy excitations of the XXZ chain
over the degenerate ground state in the antiferromagnetic
massive regime of the phase diagram, considered in this
work, it is known to give an incorrect description \cite{BVV83,%
ViWo84,DGKS15a}.

It is important to stress that all these works deal with the
conductivity of the XXZ chain in the limit $\omega\to 0$; we are
not aware of any exact result for genuine finite frequencies in the
literature. In any case, our work presented below is independent of
and rather orthogonal to the previous results in that, instead of
$T > 0$, $\om = 0$, we consider $T = 0$, $\om > 0$ in the
framework of an exact calculation of a dynamical correlation
function that does not involve any kind of string hypothesis.

So far, the most successful attempts to exactly calculate
dynamical correlation functions of the XXZ chain were based
on different types of form factor series expansions. Besides
the series involving form factors of the Hamiltonian
\cite{JiMi95,BCK96,BKM98,CaHa06,CKSW12,CaMa05,CMP08,KKMST12,%
Kozlowski18,Kozlowski21}, there is a different type of series that
utilizes form factors of the quantum transfer matrix of the
model \cite{GKKKS17}. The latter type has been dubbed the
thermal form factor expansion. It was designed to deal with
the canonical finite temperature case, but, in principle, can
be extended to include certain generalized Gibbs ensembles
\cite{FaEs13}. The thermal form factor expansion has not been
much explored so far, but it seems to have certain advantages
over the more conventional expansions employing the form factors
in a Hamiltonian basis. In \cite{GKS20a,GKS20b} it was observed
that it can have rather nice asymptotic properties as compared to
the conventional representation \cite{CIKT92,IIKS93b}
and can be interpreted as a resummation of the latter. For the
XXZ chain in the massive antiferromagnetic regime, we observed
a remarkable simplification of the Bethe root patterns occurring
in the low-temperature limit \cite{DGKS15b}. In this limit all
string excitations disappear and the whole spectrum of the
quantum transfer matrix can be interpreted in terms of
particle-hole excitations. This made it possible to derive
a series representation for the longitudinal correlation functions
of the XXZ chain in the massive antiferromagnetic regime, in
which the $n$th term comprises all scattering states of $n$
particles and $n$ holes in a $2n$-fold integral with an integrand
that is explicitly expressed in terms of known special functions
\cite{BGKS21a,BGKSS21}. This has to be contrasted with older
representations in which the integrand for general $n$ is
itself a sum over multiple integrals \cite{JiMi95}, a multiple
residue \cite{DGKS15a} or, in the best case, a product of explicit
functions and Fredholm determinants \cite{DGKS16b}. It is the simple
explicit form in \cite{BGKS21a,BGKSS21} which makes the higher-$n$
terms in the form factor series efficiently computable. The
observed fast convergence of the series, on the other hand,
is not a feature of the thermal form factor approach, but can
be attributed to the massive nature of the excitations.

In Ref.~\cite{GKKKS17}, thermal form factor series were introduced
with the example of operators of `length one', where the length
of an operator is defined as the number of lattice sites on which
it acts non-trivially. The local Pauli matrices $\s_j^\a$ are
examples of such operators. The local spin current
\begin{equation}
     {\cal J}_j = - 2 \i J (\s_{j-1}^- \s_j^+ - \s_{j-1}^+ \s_j^-)
\end{equation}
has length two. The results of Ref.~\cite{GKKKS17} can be rather naturally
extended to operators of arbitrary length. Although the general
formula is easy to guess, its proof is slightly technical. It will be
presented in a forthcoming publication. For a subclass of these operators
(the `spin-zero' operators) we can then introduce certain `properly
normalized form factors' which can be related to the theory of
factorizing correlation functions and the Fermionic basis approach
\cite{BGKS06,BGKS07,BJMST08a,JMS08,BoGo09}. This allows to
obtain series representations which, in the antiferromagnetic
massive regime, are as explicit as the series for the longitudinal
correlation functions obtained in \cite{BGKS21a,BGKSS21}. Prior
to working out the general theory, we shall present here our results
for the current-current correlation functions $\<{\cal J}_1 (t)
{\cal J}_{m+1}\>$ that may be of particular interest to the
physics community. For this special case the result may be
obtained with moderate effort by combining \cite{BGKS07,BoGo09} 
with \cite{GKKKS17,BGKS21a,BGKSS21}. One has to start with a
two-site generalized density matrix involving a twist or
`disorder parameter' $\a$ and then use the $R$-matrix symmetry,
that imposes a set of quadratic relations on the two-site
generalized density matrix, together with the reduction relation
for the latter. The general case is harder and requires the
techniques introduced in \cite{BJMST08a,JMS08}.

\section{Two-point function of currents}
\label{Sec2}
\subsection{Form factor series}
\label{FF}
The antiferromagnetic massive regime of the ground state phase
diagram of Hamiltonian~(\ref{hxxz}) is defined by the inequalities
$\D = \ch(\g) > 1$ and $|h| < h_\ell = 4 J \sh(\g) \dh_4^2 (0|q)$
for $\g > 0$. Here we have set $q = \re^{- \g}$, and $\dh_4$ denotes
a Jacobi theta function. For the Jacobi theta functions, that will
be frequently needed below, we shall follow the conventions of
Ref.~\cite{WhWa63ch21}, see \eqref{definition theta 4}-\eqref{definition theta 1 2 3}
for a reminder. Other special functions that occur in the
definition of the form factor amplitudes belong to the
families of $q$-gamma and $q$-hypergeometric (or basic
hypergeometric) functions. We list their definitions and some
of their properties in Appendix~\ref{app:specialfun}.

In order to be able to present our series representation for the
current-current correlation functions, we first of all have to
fix some notation. In the antiferromagnetic massive regime,
the dispersion relation of the elementary excitations can be
expressed explicitly in terms of theta functions
\begin{subequations}
\label{momen}
\begin{align} \label{ptheta4}
      p(\la) & = \frac{\p}{2} + \la
             - \i \ln \biggl(
	       \frac{\dh_4 (\la + \i \g/2| q^2)}{\dh_4 (\la - \i \g/2| q^2)}
	       \biggr) \epc \\[1ex] \label{dressede}
     \e(\la) & = - 2 J \sh (\g) \dh_3 \dh_4
                       \frac{\dh_3 (\la)}{\dh_4 (\la)} \epp
\end{align}
\end{subequations}
Here $p$ is the momentum, $\e$ is the dressed energy (for
$h = 0$), $\lambda$ the rapidity, and we will use the convention $\dh_j=\dh_j(0|q)$.

The integrands in each term of our form factor series are
parameterized in terms of two sets ${\cal U} = \{u_j\}_{j=1}^\ell$
and ${\cal V} = \{v_k\}_{k=1}^\ell$ of `hole and particle type'
rapidity variables of equal cardinality $\ell$. For sums and
products over these variables we shall employ the short-hand
notations
\begin{equation} \label{defsetsumprod}
     \sum_{\la \in {\cal U} \ominus {\cal V}} f(\la)
        = \sum_{\la \in {\cal U}} f(\la)
          - \sum_{\la \in {\cal V}} f(\la) \epc \qd
     \prod_{\la \in {\cal U} \ominus {\cal V}} f(\la)
        = \frac{\prod_{\la \in {\cal U}} f(\la)}{\prod_{\la \in {\cal V}} f(\la)} \epp
\end{equation}
We define
\begin{equation}
     \Si = - \frac{\p k}{2} - \2 \sum_{\la \in {\cal U} \ominus {\cal V}} \la
\end{equation}
and
\begin{equation} \label{defphpm}
     \Ph^{(\pm)} (\la) =
        \re^{\pm \i \Si}
        \prod_{\m \in {\cal U} \ominus {\cal V}}
        \tst{\G_{q^4} \bigl(\2 \pm \frac{\la - \m}{2 \i \g}\bigr)
             \G_{q^4} \bigl(1 \mp \frac{\la - \m}{2 \i \g}\bigr)} \epp
\end{equation}
We introduce multiplicative spectral parameters $H_j = \re^{2 \i u_j}$,
$P_k = \re^{2 \i v_k}$ and the following special basic hypergeometric
series,
\begin{subequations}
\begin{align}
     \PH_1 (P_k, \a) & = \:
        _{2 \ell} \PH_{2\ell - 1}
           \Biggl(\begin{array}{@{}r} q^{-2},
                     \{q^2 \frac{P_k}{P_m}\}_{m \ne k}^\ell,
                     \{\frac{P_k}{H_m}\}_{m=1}^\ell \\[.5ex]
                     \{\frac{P_k}{P_m}\}_{m \ne k}^\ell,
                     \{q^2 \frac{P_k}{H_m}\}_{m=1}^\ell
                  \end{array}; q^4, q^{4 + 2 \a}
           \Biggr), \\[1ex]
     \PH_2 (P_k, P_j, \a) & = \:
        _{2 \ell} \PH_{2\ell - 1}
           \Biggl(\begin{array}{@{}r} q^6, q^2 \frac{P_j}{P_k},
                     \{q^6 \frac{P_j}{P_m}\}_{m \ne k, j}^\ell,
                     \{q^4 \frac{P_j}{H_m}\}_{m=1}^\ell \\[.5ex]
                     q^8 \frac{P_j}{P_k},
                     \{q^4 \frac{P_j}{P_m}\}_{m \ne k, j}^\ell,
                     \{q^6 \frac{P_j}{H_m}\}_{m=1}^\ell
                  \end{array}; q^4, q^{4 + 2 \a}
           \Biggr) \epp
\end{align}
\end{subequations}
We further define
\begin{equation}
     \Ps_2 (P_k, P_j, \a) = q^{2 \a} r_\ell (P_k, P_j) \PH_2 (P_k, P_j, \a) \epc
\end{equation}
where
\begin{equation}
     r_\ell (P_k, P_j) = \frac{q^2 (1 - q^2)^2 \frac{P_j}{P_k}}
                              {(1 - \frac{P_j}{P_k})(1 - q^4 \frac{P_j}{P_k})}
        \Biggl[\prod_{\substack{m=1 \\ m \ne j, k}}^\ell
               \frac{1 - q^2 \frac{P_j}{P_m}}{1 - \frac{P_j}{P_m}}\Biggr]
        \Biggl[\prod_{m=1}^\ell \frac{1 - \frac{P_j}{H_m}}
	                             {1 - q^2 \frac{P_j}{H_m}}\Biggr] \epc
\end{equation}
and introduce a `conjugation' $\overline f (H_j, P_k, q^\a)
= f(1/H_j, 1/P_k, q^{- \a})$.

This allows to define the core part of our form factor
densities, which is a matrix ${\cal M}$ with matrix elements
\begin{multline}
     {\cal M}_{i, j} =
        \de_{ij} \biggl[\overline{\PH}_1 (P_j, 0) -
                        \frac{\Ph^{(-)} (v_j)}{\Ph^{(+)} (v_j)} \PH_1 (P_j, 0)\biggr] \\
        - (1 - \de_{ij})\biggl[\overline{\Ps}_2 (P_j, P_i, 0) -
                               \frac{\Ph^{(-)} (v_i)}{\Ph^{(+)} (v_i)}
                               \Ps_2 (P_j, P_i, 0)\biggr] \epp
\end{multline}
By $\hat {\cal M}$ we denote the matrix obtained from
${\cal M}$ upon replacing $u_j \leftrightharpoons - v_j$.
We finally introduce one more function
\begin{equation}
     \Xi (\la) = \frac{\G_{q^4} \bigl(\2 + \frac{\la}{2 \i \g}\bigr)
	          G_{q^4}^2 \bigl(1 + \frac{\la}{2 \i \g}\bigr)}
	         {\G_{q^4} \bigl(1 + \frac{\la}{2 \i \g}\bigr)
	          G_{q^4}^2 \bigl(\2 + \frac{\la}{2 \i \g}\bigr)} \epp
\end{equation}
Then the form factor amplitudes of the current-current correlation
functions are
\begin{multline}\label{eq:Azz}
     {\cal A}_{\cal J}^{(2 \ell)} ({\cal U}, {\cal V}|k) =
        \biggl(\frac{\sum_{\la \in {\cal U} \ominus {\cal V}} \e(\la)}
	            {4 \dh_1(\Si)/\dh_1'} \biggr)^2
	\biggl[
	   \prod_{\la, \m \in {\cal U} \ominus {\cal V}}
	   \Xi (\la - \m)
        \biggr] \\ \times
	\det_\ell \{{\cal M}\} \det_\ell \{\hat {\cal M}\}
	\det_\ell \biggl(\frac{1}{\sin(u_j - v_k)}\biggr)^2 
\end{multline}
where $\e(\la)$ is the dressed energy and $\dh_1' = \dh_1'(0|q)$.

Using these amplitudes as well as the momentum and dressed energy
defined in (\ref{momen}) we can formulate our main result. For
every $m \ge 0$, the dynamical two-point function of spin currents
of the XXZ chain in the antiferromagnetic massive regime and
in the zero-temperature limit can be represented by the form-factor
series
\begin{equation} \label{curcur}
     \bigl\< {\cal J}_1 (t) {\cal J}_{m+1} \bigr\> =
        \sum_{\ell = 1}^\infty C^{(2\ell)} (m, t) \epc
\end{equation}
where
\begin{equation} \label{lplh}
     C^{(2\ell)} (m, t) =
        \sum_{k = 0, 1} \frac{(-1)^{mk}}{(\ell !)^2}
	   \int_{{\cal C}_h^\nex} \frac{\rd^\nex u}{(2\p)^\nex}
	   \int_{{\cal C}_p^\nex} \frac{\rd^\nex v}{(2\p)^\nex} \:
	   {\cal A}_{\cal J}^{(2 \ell)} ({\cal U}, {\cal V}|k)
	   \re^{\i \sum_{\la \in {\cal U} \ominus {\cal V}} (t \e (\la) - m p(\la))}
\end{equation}
is the $\ell$-particle $\ell$-hole contribution. The integration
contours in (\ref{lplh}) can be chosen as ${\cal C}_h =
[-\frac{\p}{2}, \frac{\p}{2}] - \frac{\i \g}{2} + \i \de$
and ${\cal C}_p = [-\frac{\p}{2}, \frac{\p}{2}] +
\frac{\i \g}{2} + \i \de'$, where $\de, \de' > 0$ are small.
The derivation of (\ref{curcur}), (\ref{lplh}) is slightly
cumbersome. It relies on a generalization of our work in
Ref.~\cite{GKKKS17} that will be published elsewhere and
on the technical achievements obtained in Refs.~\cite{BGKS21a,BGKSS21}.

\subsection{Spin-current correlations}
\label{sec:sccorr}
Except for the vicinity of the isotropic point, the contributions
from higher $\ell$ terms in (\ref{curcur}) to $\bigl\< {\cal J}_1 (t)
{\cal J}_{m+1} \bigr\>$ turn out to be small and to decrease rapidly
in time. We plot the contributions of the $\ell=1$ term and of the
sum of the $\ell=1$  and $\ell=2$ terms to the current autocorrelation
function ($m=0$) for $\Delta=1.2$ in Fig.~\ref{fig:JJm0D12}.
\begin{figure}[!h]
\centering
\includegraphics[width=0.98\columnwidth]{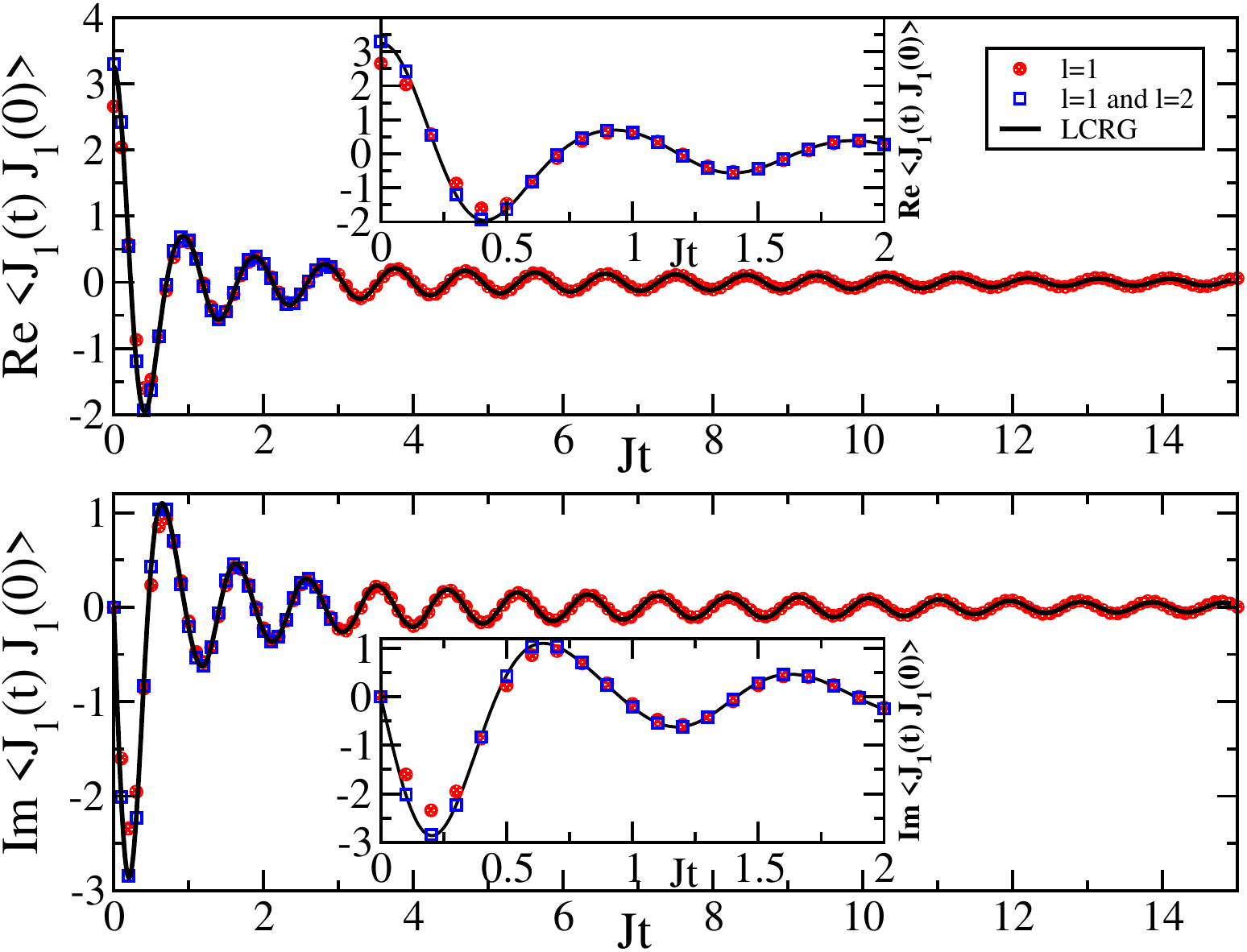}
\caption{The real and imaginary contributions of the $\ell=1$ term
and of the sum of $\ell=1$ and $\ell=2$ terms to $\bigl\< {\cal J}_1 (t)
{\cal J}_{1} \bigr\>$ for $\Delta=1.2$ are compared to LCRG results.
The insets show that the $\ell=2$ contribution becomes negligible on this
scale for $tJ>2$.}
\label{fig:JJm0D12}
\end{figure}

In order to estimate the truncation error of our exact series
representation we compare with independent exact results for
$t = 0$. For small $m$ such results
are available due to the factorization of the reduced density
matrix in the static case \cite{BGKS07,BJMST08a,JMS08}. We have
checked that, for $0\le m \le 2$ with various values of $\Delta$,
away from the isotropic point, the sum of the $\ell=1$ and $\ell=2$
terms in (\ref{curcur}) recovers the known exact values with 
good accuracy. For example, for $\Delta=1.5$, the exact value
of $\bigl\< {\cal J}_1 (0) {\cal J}_{2} \bigr\>$ is $-0.333748\ldots$,
while (\ref{curcur}) yields $-0.333687\ldots$.


For $t>0$ independent exact data are no longer available and the
results of the form factor series are compared to a light-cone
renormalization group (LCRG) calculation. The latter is a
density-matrix renormalization group algorithm which makes use
of the Lieb-Robinson bounds to obtain results for infinite 
chain lengths \cite{EnSi12,EAS17}. In the LCRG calculations,
we keep $8192$ states in the truncated Hilbert space. The
truncation error reaches $\sim 10^{-6}$ at the longest simulation
times shown. A comparison with calculations where the number of
states kept is varied (not shown) suggests that the error of
the LCRG calculations remains always smaller than the size of
the symbols used to represent the results from the form factor
series. The LCRG data and the form factor series are in excellent
agreement.  

The explicit formula (\ref{curcur}) makes it possible to evaluate
the correlation function $\<{\cal J}_1 (t) {\cal J}\bigr\> \equiv
\sum_{m=0}^{L_c} \<{\cal J}_1 (t) {\cal J}_{m+1}\bigr\>$, where
$L_c \in {\mathbb N}$, even for large $L_c$ and $t$ numerically.
As an illustration, we plot the real and imaginary parts of
$\<{\cal J}_1 (t) {\cal J}\bigr\>$ for $\Delta=3$ with $L_c=349$
in Figure~\ref{fig:JJD3L1}. The contributions from higher $\ell$
are small in this case and we only include $\ell=1$. The high-frequency
oscillation in the right plot, centred around zero, and the continuous
decay of the correlation function for long times clearly indicate
that, in agreement with common belief, the zero-temperature spin
transport is non-ballistic. A ballistic contribution would show
up as a non-vanishing constant long-time asymptotics of this
correlation function.
\begin{figure}[!h]
\centering
{\includegraphics[width=.48\textwidth]{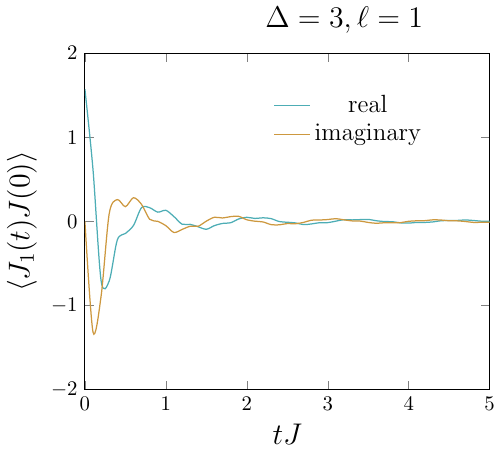} \hfill
\includegraphics[width=.48\textwidth]{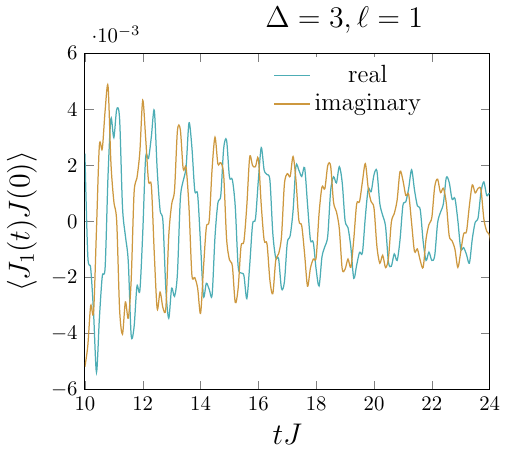}}
\caption{$\<{\cal J}_1 (t) {\cal J}\bigr\>$ for $\Delta=3$, $L_c=349$
and times $0<tJ<5$ (left), $10 < tJ <24$ (right).
}
\label{fig:JJD3L1}
\end{figure}

For smaller $\Delta$, the conclusion is less obvious from our data
for small times, see the left panel in Fig.~\ref{fig:JJD15L1}.
However, in contrast to purely numerical methods the thermal form factor
expansion allows to obtain highly accurate data for long times,
see the right panel in Fig.~\ref{fig:JJD15L1}. These data clearly
show that the correlation function decays and has low-frequency
oscillations.
\begin{figure}[!h]
\centering
{\includegraphics[height=6.4cm]{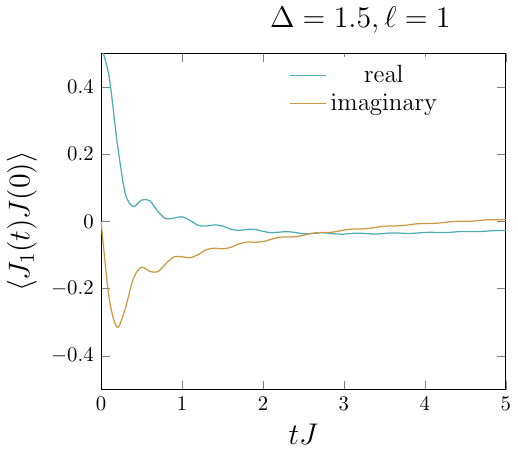} \hfill
\includegraphics[height=6.4cm]{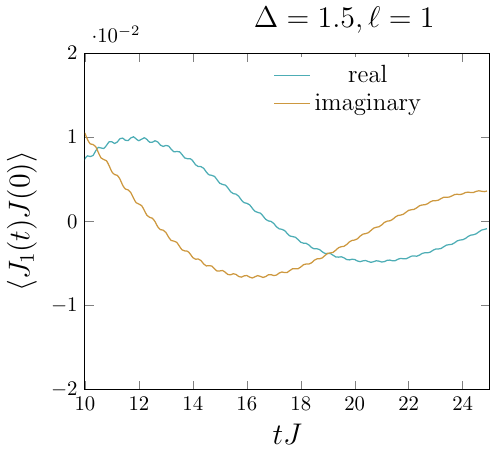}}
\caption{$\<{\cal J}_1 (t) {\cal J}\bigr\>$ for $\Delta=1.5$ and
times  $0<tJ<5$ (left), $10 < tJ <24$ (right). We used $L_c=219$.
}
\label{fig:JJD15L1}
\end{figure}

The upper limit  $L_c$ in the sum over distances between the current
operators is determined by a characteristic velocity of the excitations
and the maximum time scale we want to reach. We fix $L_c \gtrsim v_2 t$,
where $v_2$ is the upper critical velocity appearing in the
long-time large-distance asymptotic analysis of the current-current
correlation functions (see Sec.~\ref{Sec:asyana} below).
Typical values of $v_2$ are listed in Table \ref{tab_velocity}.
\begin{table}[!h]
\begin{center}
\setlength{\tabcolsep}{1em}
\begin{tabular}{cccc}
\toprule
    $\Delta$ & $1.5$ &  $2$ & $3$ \\ \midrule[1pt]
    $v_2/J$ & $7.47329$ & $9.06159$ & $12.6851$ \\
\bottomrule
\end{tabular}
\caption{Velocity $v_2/J$ for various anisotropies
$\Delta$ according to equation (\ref{critvs}).}
\label{tab_velocity}
\end{center}
\end{table}
We will choose similar values of $L_c$ in the sections below unless
the contributions from large distances $m$ are negligible.

\subsection{Long-time large-distance asymptotics}
\label{Sec:asyana}
From the example in Fig.~\ref{fig:JJm0D12}, we see that the
two-particle two-hole term significantly contributes to the numerical
value of the correlation function only at short times. The long-time
large-distance asymptotics of the correlation function is entirely
determined by the one-particle one-hole term,
\begin{equation}
     \bigl\< {\cal J}_1 (t) {\cal J}_{m+1} \bigr\> \sim C^{(2)} (m, t) \epp
\end{equation}
This is what makes the series representation (\ref{curcur}),
(\ref{lplh}) so efficient. The double integral $C^{(2)} (m, t)$
can be numerically evaluated as accurately as we wish, because
its asymptotic behaviour for $m, t \rightarrow \infty$ at fixed
ratio $v = m/t$ is known in closed form from a saddle-point analysis.
Such type of analysis was carried out for the two-point functions of the
local magnetization $\<\s_1^z (t) \s_{m+1}^z\>$ in one of our previous
works \cite{DGKS16a}. Here the mathematical problem is exactly the
same. Referring to equations (\ref{ctwosym}), (\ref{atwosym}) in
Appendix~\ref{app:twospincond}, we can rewrite $C^{(2)} (m, t)$ as
\begin{equation} \label{c2fg}
     C^{(2)} (m, t) =
        \int_{- \frac{\p}{2}}^\frac{\p}{2} \frac{\rd z_1}{2\p}
	\int_{- \frac{\p}{2}}^\frac{\p}{2} \frac{\rd z_2}{2\p} \:
	   f(z_1, z_2) \re^{t (g(z_1) + g(z_2))} \epc
\end{equation}
where
\begin{subequations}
\begin{align}
     g (z) & = \i \bigl(\e(z) - v p(z)\bigr) \epc \\[1ex]
     f(z_1, z_2) & = {\cal A}^{(2)}_s (z_1, z_2|0)
                     + (-1)^m {\cal A}^{(2)}_s (z_1, z_2|1) \epp
\end{align}
\end{subequations}
The definition of ${\cal A}^{(2)}_s$ can be found in equation
(\ref{atwosym}) below. The important point is that $f(z_1, z_2)$
has a double zero at $z_1 = z_2$ implying that (\ref{c2fg}) is
of the same form as the integral analysed in \cite{DGKS16a}.

Let us briefly recall the main results of \cite{DGKS16a}. The
asymptotics of $C^{(2)}$ is determined by the roots of the saddle-%
point equation $g' (z) = 0$ on steepest descent contours joining
$- \p/2$ and $\p/2$. The saddle-point equation is most compactly
expressed in terms of Jacobi elliptic functions and their parameters
that will also occur below in our discussion of the one-particle
one-hole contribution to the optical conductivity. We shall need
the elliptic module~$k$, the complementary module $k'$ and the
complete elliptic integral $K$. They are all conveniently
parameterized in terms of the elliptic nome $q$ by means of
$\dh_j\equiv\dh_j(0|q)$,
\begin{equation} \label{ellipticcons}
     k = \dh_2^2/\dh_3^2 \epc \qd
     k' = \dh_4^2/\dh_3^2 \epc \qd
     K = \p \dh_3^2/2 \epp
\end{equation}
Let
\begin{equation} \label{critvs}
     v_1 = \frac{4J K k^2 \sh(\g)}{\p (1 + k')} \epc \qd
     v_2 = \frac{4J K k^2 \sh(\g)}{\p (1 - k')}
\end{equation}
and $k_1 = v_1/v_2$. The first relation (\ref{ellipticcons}) is
invertible which allows us to interpret $K$ as a function of $k$.
Let $K_1 = K(k_1)$. Then the saddle-point equation can be
represented as
\begin{equation} \label{spe}
     \sn \biggl(\frac{4 K_1 z}{\p}\bigg|k_1\biggr) = \frac{v}{v_1} \epc
\end{equation}
where $\sn$ is a Jacobi elliptic function. The solutions of the
saddle-point equation divide the $m$-$t$ world plane into three
different asymptotic regimes,
\begin{equation}
     0 < v < v_1 \epc \qd v_1 < v < v_2 \epc \qd v_2 < v \epc
\end{equation}
which were called \cite{DGKS16a} the `time-like regime', the `precursor
regime' and the `space-like regime' by analogy with the asymptotic
analysis of electro-magnetic wave propagation in media.

Here we recall only the result of the asymptotic analysis in the
time-like regime as it is relevant for the `true long-time behaviour'.
For the other two asymptotic regimes the reader is referred to
\cite{DGKS16a}. In the time-like regime, $0 < v/v_1 < 1$, the
saddle-point equation (\ref{spe}) has two inequivalent real
solutions,
\begin{equation}
     \la^- = \frac{\p}{4 K_1}
             \arcsn \biggl(\frac{v}{v_1}\bigg|k_1\biggr) \epc \qd
	     \la^+ = \frac{\p}{2} - \la^- \epc
\end{equation}
located in the interval $[0, \p/2]$. The function occurring on
the right hand side is the inverse Jacobi-$\sn$ function. The
long-time large-distance asymptotics of $C^{(2)} (m, t)$ in
the time-like regime is then determined by the saddle points,
\begin{equation}
     C^{(2)} (m, t) \sim \frac{f(\la^+, \la^-)}{\p t}
        \prod_{\s = \pm} \frac{\re^{t g(\la^\s)}}{\sqrt{g''(\la^\s)}} \epp
\end{equation}
Note that the product on the right hand side can be expressed
explicitly in terms of elementary transcendental functions of
$v = m/t$ \cite{DGKS16a}.

\section{Optical Conductivity}
\label{Sec3}
Quite generally, current-current correlation functions determine
transport coefficients within the framework of linear response
theory. The correlation function of two spin-current operators
considered above determines the optical spin-conductivity $\s (\om)$.
\subsection{Form factor series}
Several equivalent formula expressing $\sigma(\omega)$ for the
XXZ chain in terms of current-current correlation functions have
been described in the literature (see e.g.\ \cite{Sirker20,BHKPSZ21}).
For our convenience and in order to make this work more
self-contained, we have included concise derivations in Appendix~\ref{app:spinconduct}.
We are interested in the thermodynamic limit in which the following
lemma holds true.
\begin{lemma} \label{lem:redyncond}
The real part of the optical spin conductivity of the XXZ chain
can be represented as
\begin{equation} \label{redyncond}
     \Re \s (\om) = \frac{1 - \re^{- \frac{\om}{T}}}{2 \om}
        \int_{- \infty}^\infty \rd t \: \re^{\i \om t}
	\biggl(
        2 \sum_{m = 0}^\infty \bigl\< {\cal J}_1 (t) {\cal J}_{m+1} \bigr\>_T
                          - \bigl\< {\cal J}_1 (t) {\cal J}_1 \bigr\>_T
			    \biggr) \epp 
\end{equation}
\end{lemma}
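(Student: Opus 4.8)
The plan is to start from the Kubo formula for the real part of the optical conductivity expressed directly in terms of the current operators, and to manipulate it into the form \eqref{redyncond} by combining translational invariance with the spectral (Lehmann) representation of the current-current correlation function. Concretely, linear response theory gives $\Re\sigma(\omega)$ as proportional to $\frac{1}{\omega L}\int_{-\infty}^\infty \rd t\, \re^{\i\omega t}\bigl\langle [J_{\rm tot}(t), J_{\rm tot}]\bigr\rangle_T$, where $J_{\rm tot} = \sum_{j=1}^L {\cal J}_j$ is the total spin current. The first step is to expand this double sum $\langle J_{\rm tot}(t) J_{\rm tot}\rangle_T = \sum_{i,j} \langle {\cal J}_i(t){\cal J}_j\rangle_T$, use translational invariance to write $\langle {\cal J}_i(t){\cal J}_j\rangle_T = \langle {\cal J}_1(t) {\cal J}_{j-i+1}\rangle_T$ (with indices mod $L$), and conclude that the sum equals $L\sum_{m} \langle {\cal J}_1(t){\cal J}_{m+1}\rangle_T$ summed over a full period of $m$; the factor $L$ cancels the $1/L$ in the Kubo formula, giving a well-defined thermodynamic limit.

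The second step is to handle the commutator and the prefactor. The standard route is to introduce the spectral function $C(t) = \langle J_{\rm tot}(t) J_{\rm tot}\rangle_T$ and use the KMS/detailed-balance relation: in Fourier space, $\tilde C(-\omega) = \re^{-\omega/T}\tilde C(\omega)$, so that the Fourier transform of the commutator $\langle [J_{\rm tot}(t),J_{\rm tot}]\rangle_T$ equals $(1-\re^{-\omega/T})\tilde C(\omega)$. This produces the prefactor $\frac{1-\re^{-\omega/T}}{2\omega}$ in \eqref{redyncond}. One then has to be careful about whether the Drude (zero-frequency) piece is included or subtracted; since the lemma concerns the regular part and the statement is about the thermodynamic limit, the $m$-sum together with the explicit structure should be arranged so that what survives is exactly the combination $2\sum_{m\ge 0}\langle {\cal J}_1(t){\cal J}_{m+1}\rangle_T - \langle {\cal J}_1(t){\cal J}_1\rangle_T$. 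The subtraction of the $m=0$ term and the doubling of the $m\ge 0$ sum is precisely the combinatorial residue of symmetrizing the double lattice sum $\sum_{i,j}$ over $i<j$, $i>j$ and $i=j$: writing $\sum_{i,j} = \sum_{i<j} + \sum_{i>j} + \sum_{i=j}$, the two off-diagonal pieces each give $\sum_{m\ge 1}\langle {\cal J}_1(t){\cal J}_{m+1}\rangle$ (using ${\cal J}_j = {\cal J}_j^\dagger$ and time-translation to pair them), and the diagonal gives $\langle {\cal J}_1(t){\cal J}_1\rangle$; hence $\sum_{i,j} \to 2\sum_{m\ge 1} + 1\cdot(m=0) = 2\sum_{m\ge 0} - (m=0)$, which is the parenthesized expression in \eqref{redyncond}.

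The third step is to argue that all these manipulations are legitimate in the thermodynamic limit: the connected correlation functions $\langle {\cal J}_1(t){\cal J}_{m+1}\rangle_T^c$ must decay fast enough in $m$ (and $t$) for the interchange of the $L\to\infty$ limit with the $m$-sum and the $t$-integral to be valid. At $T>0$ this decay is exponential (finite correlation length), so absolute convergence and dominated convergence apply; at $T=0$ in the massive regime the decay is also exponential in the spacelike region and the excitation gap controls the behaviour, so the same conclusion holds. One should remark that the representation \eqref{redyncond} is to be understood as giving $\Re\sigma(\omega)$ including its possible $\delta(\omega)$ Drude contribution when the $t$-integral fails to converge absolutely, or alternatively that the Drude weight has been separated off; referencing the concise derivations in Appendix~\ref{app:spinconduct}, the precise bookkeeping of the $\omega\to 0$ singularity is relegated there.

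I expect the main obstacle to be not any single computation but the careful justification of the order of limits and the precise treatment of the boundary/zero-frequency term: one must show that the finite-size double sum $\frac{1}{L}\sum_{i,j}\langle{\cal J}_i(t){\cal J}_j\rangle$, after using periodicity, genuinely converges as $L\to\infty$ to $\sum_{m=-\infty}^\infty \langle {\cal J}_1(t){\cal J}_{m+1}\rangle$ and that this in turn can be folded to $2\sum_{m\ge 0} - \langle{\cal J}_1(t){\cal J}_1\rangle$ using hermiticity and the symmetry $\langle{\cal J}_1(t){\cal J}_{m+1}\rangle = \langle{\cal J}_1(t){\cal J}_{-m+1}\rangle$ (which follows from space-reflection symmetry of the Hamiltonian combined with the fact that ${\cal J}_j$ is odd under reflection, so the two signs cancel in the two-point function). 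Making the reflection argument airtight — in particular checking that ${\cal J}_j \mapsto -{\cal J}_{-j}$ under parity and that therefore $\langle{\cal J}_1(t){\cal J}_{m+1}\rangle$ is even in $m$ up to the shift — is the step I would be most careful about, since the combinatorial factor $2$ and the $-\langle{\cal J}_1(t){\cal J}_1\rangle$ correction in \eqref{redyncond} hinge entirely on it.
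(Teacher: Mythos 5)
Your proposal is correct and follows essentially the same route as the paper: start from the fluctuation form of the Kubo formula (equation (\ref{resigfluctform}) of Appendix~\ref{app:spinconduct}), use translation invariance to collapse the double lattice sum, and use parity (reflection) invariance --- under which ${\cal J}_j$ picks up a sign that cancels in the two-point function --- to fold the second half of the sum onto the first, yielding the combination $2\sum_{m\ge 0}\langle{\cal J}_1(t){\cal J}_{m+1}\rangle_T-\langle{\cal J}_1(t){\cal J}_1\rangle_T$ in the thermodynamic limit. The only cosmetic difference is that you symmetrize the double sum over $i<j$, $i>j$, $i=j$ while the paper splits the single $m$-sum at the midpoint $L/2$; the key ingredient (the parity argument you flag as the delicate step) is exactly the one the paper invokes.
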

\begin{proof}
We start with equation (\ref{resigfluctform}) from
Appendix~\ref{app:spinconduct}, 
\begin{equation}\label{Resigma_L1}
     \Re \s (\om) = \frac{1 - \re^{- \frac{\om}{T}}}{2 \om}
        \int_{- \infty}^\infty \rd t \: \re^{\i \om t}
        \lim_{L \rightarrow \infty} \frac{1}{L}
	\sum_{j, m = 1}^L \bigl\<{\cal J}_j (t) {\cal J}_m\bigr\>_T \epp
\end{equation}
Here $L$ is the length of the periodic system. We shall assume $L$
to be even. The Hamiltonian is invariant under translations modulo
$L$ and under parity transformations $j \rightarrow L - j + 1$. Hence,
\begin{multline} \label{calcjjlim}
     \frac{1}{L} \sum_{j, m = 1}^L \bigl\<{\cal J}_j (t) {\cal J}_m\bigr\>_T
        = \sum_{m = 1}^L \bigl\<{\cal J}_1 (t) {\cal J}_m\bigr\>_T \\[-1ex] =
          \bigl\<{\cal J}_1 (t) {\cal J}_1\bigr\>_T
	  + \sum_{m = 1}^{\frac{L}{2} - 1} \bigl\<{\cal J}_1 (t) {\cal J}_{m + 1} \bigr\>_T
          + \bigl\<{\cal J}_1 (t) {\cal J}_{\frac{L}{2} + 1} \bigr\>_T
	  + \sum_{m = \frac{L}{2} + 1}^{L - 1}
	       \bigl\<{\cal J}_1 (t) {\cal J}_{m + 1} \bigr\>_T \\[-2ex] =
          \bigl\<{\cal J}_1 (t) {\cal J}_1\bigr\>_T
	  + 2 \sum_{m = 1}^{\frac{L}{2} - 1} \bigl\<{\cal J}_1 (t) {\cal J}_{m + 1} \bigr\>_T
          + \bigl\<{\cal J}_1 (t) {\cal J}_{\frac{L}{2} + 1} \bigr\>_T \epp
\end{multline}
Here, we have split the summation so that, upon using the $L$-periodicity
of the lattice, the summed-up terms do get farther and farther away from
the first site. This produces the factor of $2$ and is necessary for
appropriately taking the thermodynamic limit. Hence, 
\begin{equation}
     \lim_{L \rightarrow \infty}
        \frac{1}{L} \sum_{j, m = 1}^L \bigl\<{\cal J}_j (t) {\cal J}_m\bigr\>_T =
          \bigl\<{\cal J}_1 (t) {\cal J}_1\bigr\>_T
	  + 2 \sum_{m = 1}^\infty \bigl\<{\cal J}_1 (t) {\cal J}_{m + 1} \bigr\>_T \epc
\end{equation}
where the expectation values on the right hand side are now those
in the thermodynamic limit ($m$ fixed, $L \rightarrow \infty$).
\end{proof}

The zero-temperature limit of (\ref{redyncond}) for $\om > 0$ is
obvious. In this limit we can take our numerical results for
$\<{\cal J}_1 (t) {\cal J}\>$ from section \ref{sec:sccorr} with
$L_c$ sufficiently large in order to compute $\Re \s (\om)$
numerically from Lemma~\ref{lem:redyncond}. The results shown
in Fig.~\ref{fig:TwosipnonconductivityD3} and
Fig.~\ref{fig:conductivityD3L12} below were obtained this way. On
the other hand, the summation over all lattice sites involved in
(\ref{redyncond}) can be easily carried out analytically on the
series representation (\ref{curcur}), (\ref{lplh}), and we obtain
the following lemma.

\begin{lemma} \label{lem:cckzero}
In the antiferromagnetic massive regime for $T \rightarrow 0$ the
correlation function under the integral in (\ref{redyncond}) has
the thermal form factor series representation
\begin{multline} \label{curcurkzero}
     2 \sum_{m = 0}^\infty \bigl\< {\cal J}_1 (t) {\cal J}_{m+1} \bigr\>
                          - \bigl\< {\cal J}_1 (t) {\cal J}_1 \bigr\> = \\[-1ex]
        \sum_{\substack{\nex \in {\mathbb N}\\k = 0, 1}}
	  \frac{1}{(\nex !)^2}
	     \int_{{\cal C}_h^\nex} \frac{\rd^\nex u}{(2\p)^\nex}
	     \int_{{\cal C}_p^\nex} \frac{\rd^\nex v}{(2\p)^\nex} \:
	     {\cal A}_\s^{(2 \ell)} ({\cal U}, {\cal V}|k)
	     \re^{\i t \sum_{\la \in {\cal U} \ominus {\cal V}} \e (\la|h)} \epc
\end{multline}
where
\begin{equation} \label{ampsigma}
     {\cal A}_\s^{(2 \ell)} ({\cal U}, {\cal V}|k) =
        - \i \ctg \Bigl(\tst{\2 \bigl(\p k +
	        \sum_{\la \in {\cal U} \ominus {\cal V}} p(\la)\bigr)}\Bigr)
		{\cal A}_{\cal J}^{(2 \ell)} ({\cal U}, {\cal V}|k) \epp
\end{equation}
\end{lemma}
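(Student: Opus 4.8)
The plan is to start from the form factor series (\ref{curcur}), (\ref{lplh}) for $\bigl\<{\cal J}_1(t){\cal J}_{m+1}\bigr\>$ and carry out the summation over $m$ term by term in $\ell$. Concretely, I would insert the series into the left hand side of (\ref{curcurkzero}) and exchange the (finite, once regulated) $m$-sum with the $\ell$-sum and the rapidity integrals; the only $m$-dependence in the $\ell$-th summand sits in the explicit prefactor $(-1)^{mk}$ and in the exponential $\re^{-\i m \sum_{\la\in{\cal U}\ominus{\cal V}} p(\la)}$. So the core computation is the geometric-type sum
\begin{equation}
     2\sum_{m=0}^\infty (-1)^{mk}\, \re^{-\i m P} - 1 \epc \qquad
     P = \sum_{\la\in{\cal U}\ominus{\cal V}} p(\la) \epp
\end{equation}
Summing $2\sum_{m\ge 0} x^m - 1 = (1+x)/(1-x)$ with $x = (-1)^k\re^{-\i P} = \re^{\i(\p k - P)}$ gives exactly $-\i\ctg\bigl(\2(\p k - P)\bigr)$, and since $\ctg$ is odd this is $\i\ctg\bigl(\2(\p k + P)\bigr)$ up to the sign that one tracks carefully against the stated (\ref{ampsigma}). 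This reproduces the claimed cotangent factor, and the rest of the integrand of $C^{(2\ell)}$ — namely ${\cal A}_{\cal J}^{(2\ell)}({\cal U},{\cal V}|k)\,\re^{\i t\sum \e(\la)}$ — is untouched, so the resulting amplitude is precisely ${\cal A}_\s^{(2\ell)} = -\i\ctg\bigl(\2(\p k + P)\bigr){\cal A}_{\cal J}^{(2\ell)}$, and the energy in the exponential is $\e(\la|h)$, reducing to $\e(\la)$ of (\ref{dressede}) at $h=0$ in the zero-temperature limit as already used throughout Section~\ref{Sec2}.

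The step requiring genuine care — and what I expect to be the main obstacle — is the justification of interchanging the $m$-summation with the rapidity integration and with the $\ell$-sum, since $\sum_{m\ge 0}(-1)^{mk}\re^{-\i m P}$ does not converge absolutely: on the contours ${\cal C}_h = [-\p/2,\p/2] - \i\g/2 + \i\de$ and ${\cal C}_p = [-\p/2,\p/2] + \i\g/2 + \i\de'$ the combination $P = \sum p(\la)$ picks up an imaginary part whose sign depends on whether more hole-type or particle-type rapidities are present, so $|x|$ is not uniformly $<1$. The clean way around this is the standard $i\eps$/Abel regularization: multiply the summand by $\re^{-\eta m}$, $\eta>0$, perform the now-absolutely-convergent geometric sum to get $(1+x\re^{-\eta})/(1-x\re^{-\eta})$, do the contour integral, and only then send $\eta\to 0^+$. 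One has to check that after integration over ${\cal C}_h^\ell\times{\cal C}_p^\ell$ the limit is finite and that no pole of $\ctg\bigl(\2(\p k + P)\bigr)$ — i.e.\ no configuration with $P \in \p k + 2\p\mathbb{Z}$ and $\Im P = 0$ — is pinched against the contour; this is where one would lean on the fact that the current amplitude ${\cal A}_{\cal J}^{(2\ell)}$ carries the factor $\bigl(\sum_{\la}\e(\la)\big/(4\dh_1(\Si)/\dh_1')\bigr)^2$, whose zero of $\dh_1(\Si)$ is compensated, and that the $\delta,\delta'>0$ deformations keep $\Im P\ne 0$ except on a measure-zero set, so the apparent singularity of the cotangent is integrable. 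One may also invoke the already-established convergence of the $\ell$-series (attributed in the excerpt to the massive gap) to move the $m$-sum past it.

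In writing this up I would (i) recall (\ref{curcur})–(\ref{lplh}); (ii) substitute into the left side of (\ref{curcurkzero}), isolate the $m$-dependence, and perform the regularized geometric sum to produce $-\i\ctg\bigl(\2(\p k + \sum_{\la\in{\cal U}\ominus{\cal V}} p(\la))\bigr)$; (iii) note that the $k$-sum in (\ref{lplh}) is exactly the $k=0,1$ sum, so the two $(-1)^{mk}$ branches are handled simultaneously; (iv) read off ${\cal A}_\s^{(2\ell)}$ as in (\ref{ampsigma}) and observe the exponential weight is unchanged, giving (\ref{curcurkzero}) with $\e(\la|h)\to\e(\la)$ at $h=0$; and (v) append the analytic remark on the Abel regularization and the absence of pinching. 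The computation proper is a one-line geometric series; essentially all of the work is the analytic bookkeeping in (v), which I would state crisply rather than belabour, since the contour structure needed has already been fixed in Section~\ref{Sec2} and the analogous interchange for the longitudinal correlators was carried out in \cite{BGKS21a,BGKSS21}.
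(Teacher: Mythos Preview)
Your approach is the paper's approach --- perform the geometric sum in $m$ --- and your algebraic reduction to $-\i\ctg\bigl(\tfrac12(\pi k+P)\bigr)$ is correct. The paper's proof is in fact two sentences: it simply asserts that $\Re\sum_{\la\in{\cal U}\ominus{\cal V}}\i p(\la)>0$ whenever the $u_j$ lie on ${\cal C}_h$ and the $v_k$ on ${\cal C}_p$ (referring to Appendix~A.2 of \cite{DGKS15a}), and concludes that the geometric series is therefore absolutely convergent.

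So the one substantive thing to correct in your write-up is the convergence discussion. You claim the sign of $\Im P$ is indeterminate and therefore reach for an Abel regulator; in fact the sign is fixed once the contours are chosen as in Section~\ref{FF}, because the imaginary part of $p$ on ${\cal C}_h$ and on ${\cal C}_p$ is controlled by the $\dh_4$-ratio in (\ref{ptheta4}) and works out with the right sign (this is the content of the cited appendix). Your remark that the sign ``depends on whether more hole-type or particle-type rapidities are present'' is also off: the sets always satisfy $|{\cal U}|=|{\cal V}|=\ell$. With the definite-sign fact in hand the interchange of the $m$-sum with the rapidity integrals is dominated convergence, and no $\eta$-regularization, pinching analysis, or appeal to the $\dh_1(\Si)$ zeros is needed. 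Replace your step~(v) by a one-line verification (or citation) of $\Im\sum_{\la\in{\cal U}\ominus{\cal V}} p(\la)<0$ on ${\cal C}_h^\ell\times{\cal C}_p^\ell$, and the proof is complete and matches the paper's.
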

\begin{proof}
$\Re \sum_{\la \in {\cal U} \ominus {\cal V}} \i p(\la) > 0$ if
$u_j \in {\cal C}_h$ and $v_k \in {\cal C}_p$ as can, for instance,
be seen from Appendix~A.2 of \cite{DGKS15a}. Hence, the summation over
$m$ can be performed by the geometric sum formula.
\end{proof}
With equation (\ref{curcurkzero}), we have an alternative starting point
for a numerical computation of the real part of the optical conductivity
at $T=0$. Again, we would have to substitute this formula into
(\ref{redyncond}) for $T = 0$, $\om > 0$ and calculate the Fourier
transform numerically. For the time being we refrain from this possibility.
The direct use of (\ref{curcur}) and (\ref{lplh}), for which we could
resort to existing computer programs \cite{BGKSS21}, gives reliable
numerical results as can be seen from Fig.~\ref{fig:TwosipnonconductivityD3}.
Our algorithm uses numerical saddle point integration. For a use
with (\ref{curcurkzero}), (\ref{ampsigma}) we would have to modify it
as some of the poles of the integrand are now located at the saddle
points. We leave the interesting questions of how to deal with this
situation numerically and how to calculate the long-time asymptotics
analytically for future studies.

\subsection{Two-spinon contribution}
In the general case of $\ell$ particle-hole excitations, the numerical
evaluation of the Fourier transform seems to be the most efficient
way to make use of Lemma~\ref{lem:cckzero}. For the $\ell=1$ particle-hole
contribution, however, following the examples of \cite{BKM98,CMP08,%
BGKKW12}, the Fourier transformation can be carried out by hand.
The details of the calculation are discussed in
Appendix~\ref{app:twospincond}.

We introduce two functions
\begin{subequations}
\label{randb}
\begin{align}
    r(\om) & = \frac{\p}{K}
               \arcsn \biggl(\frac{\sqrt{(h_\ell/k')^2 - \om^2}}
	                          {h_\ell k/k'}\bigg|k \biggr) \epc
				  \label{eq:def_r} \\[1ex]
     B(z) & =
        \frac{1}{G_{q^4}^4 \bigl(\2\bigr)}
	\prod_{\s = \pm}
	\frac{G_{q^4} \bigl(\tst{1 + \frac{\s z}{2 \i \g}}\bigr)
	      G_{q^4} \bigl(\tst{\frac{\s z}{2 \i \g}}\bigr)}
	     {G_{q^4} \bigl(\tst{\frac{3}{2} + \frac{\s z}{2 \i \g}}\bigr)
	      G_{q^4} \bigl(\tst{\2 + \frac{\s z}{2 \i \g}}\bigr)} \epc
	      \label{eq:def_B} 
\end{align}
\end{subequations}
where $h_\ell$ has been defined at the beginning of Sec.~\ref{FF}.
Using (\ref{ellipticcons}) and (\ref{randb}) we can formulate our
result for the two-spinon contribution to the optical conductivity.

\begin{lemma} \label{lem:twoconduct}
The two-spinon contribution to the real part of the optical
conductivity of the XXZ chain at zero temperature and in the
antiferromagnetic massive regime can be represented as
\begin{equation}\label{sigmaTwospinon}
     \Re \s^{(2)} (\om) = \frac{q^\2 h_\ell^2 k}{8 k'}
        \frac{B\bigl(r (\om)\bigr)}{\D - \cos\bigl(r(\om)\bigr)}
	\frac{\dh_3^2}{\dh_3^2 \bigl(r(\om)/2\bigr)}
	\frac{1}{\sqrt{\bigl((h_\ell/k')^2 - \om^2\bigr)
	                 \bigl(\om^2 - h_\ell^2\bigr)}} \epc
\end{equation}
as long as $h_\ell < \omega < h_\ell/k'$. It vanishes outside
of this range of $\omega$. 
\end{lemma}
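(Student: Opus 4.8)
The plan is to extract the $\ell=1$ term from the thermal form factor series (\ref{curcurkzero}), (\ref{ampsigma}), substitute it into the expression for $\Re\s(\om)$ from Lemma~\ref{lem:redyncond} (in the $T\to 0$ limit), and carry out the time integral explicitly. First I would write down the $\ell=1$ amplitude ${\cal A}_\s^{(2)}(\{u\},\{v\}|k)$, which by (\ref{ampsigma}) is $-\i\cot\!\bigl(\tst{\2}(\p k + p(u) - p(v))\bigr)$ times ${\cal A}_{\cal J}^{(2)}(\{u\},\{v\}|k)$; in the one-particle one-hole sector the determinants in (\ref{eq:Azz}) collapse to scalars and the double product over ${\cal U}\ominus{\cal V}$ reduces to a single factor $\Xi(u-v)\,\Xi(v-u)$, so the integrand becomes an explicit product of theta-functions, $q$-gamma/$G_{q^4}$ factors and the dispersion data $\e,p$. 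The time integral $\int_{-\infty}^\infty\rd t\,\re^{\i\om t}\re^{\i t(\e(u)+\e(v))}$ then produces a delta function $2\p\,\de\bigl(\om+\e(u)+\e(v)\bigr)$ — recalling that in the massive regime $\e<0$ on the relevant contours, so the argument can vanish for $\om>0$.

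Next I would use the delta function to localize one of the two rapidity integrals. Shifting the contours ${\cal C}_h,{\cal C}_p$ down/up to the real axis (legitimate because the integrand, built from ratios of theta and $q$-gamma functions, is analytic in the strip between them) and using the parity/reflection symmetry $u\leftrightarrow -v$ that relates the $\ell=1$ integrand to a symmetric function ${\cal A}^{(2)}_s(z_1,z_2|k)$ of two real variables in $[-\p/2,\p/2]$ — exactly the reduction already invoked in (\ref{c2fg})–(\ref{atwosym}) — the two-fold integral becomes $\int\!\int_{-\p/2}^{\p/2} \frac{\rd z_1}{2\p}\frac{\rd z_2}{2\p}\, {\cal A}^{(2)}_s\,\de\bigl(\om - \e(z_1) - \e(z_2)\bigr)$ up to the $1-\re^{-\om/T}\to 1$ prefactor and the $1/(2\om)$. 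One of the integrations is then done against the delta: I solve $\e(z_1)+\e(z_2)=\om$ for $z_2=z_2(z_1,\om)$, which brings in the Jacobian $|\e'(z_2)|^{-1}$. Using the explicit form (\ref{dressede}) of $\e$ in terms of $\dh_3/\dh_4$ and the standard Landen/Jacobi-elliptic parametrization (\ref{ellipticcons}), the remaining single integral over $z_1$ can be performed in closed form; the upshot is that $z_1$ and $z_2$ get pinned (up to the elliptic symmetries) to a pair of rapidities determined by $\om$, and the combination that survives is encoded in the variable $r(\om)$ of (\ref{eq:def_r}) via an $\arcsn$. The threshold structure — nonzero only for $h_\ell<\om<h_\ell/k'$ — falls out of the range of $\e(z_1)+\e(z_2)$: the two-particle continuum of the dressed energy has its lower edge at $2\min(-\e)$ and upper edge at $\min(-\e)+\max(-\e)$ (equivalently $h_\ell$ and $h_\ell/k'$ after inserting (\ref{ellipticcons}), (\ref{critvs})), and outside this window the delta constraint has no solution. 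The square-root singularity $\bigl[((h_\ell/k')^2-\om^2)(\om^2-h_\ell^2)\bigr]^{-1/2}$ is precisely the van Hove Jacobian from solving the energy constraint near the two edges, and the prefactor $q^{1/2}h_\ell^2 k/(8k')$ together with $B(r(\om))$, the $\dh_3^2/\dh_3^2(r/2)$ ratio and the $1/(\D-\cos r(\om))$ come from tracking $\Xi(u-v)\Xi(v-u)$, the $\bigl(\sum\e/(4\dh_1(\Si)/\dh_1')\bigr)^2$, the $1/\sin^2(u-v)$ factor and the $\cot$ from (\ref{ampsigma}) through this substitution.

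The main obstacle I expect is the bookkeeping in the last step: showing that the tangled product of $q$-gamma and $G_{q^4}$ functions coming from $\Xi(u-v)\Xi(v-u)$ and $\Ph^{(\pm)}$ collapses exactly to the compact function $B(z)$ of (\ref{eq:def_B}), and that the theta-function prefactors reorganize into $\dh_3^2/\dh_3^2(r(\om)/2)$ and $1/(\D-\cos r(\om))$. This is the same kind of identity-chasing as in \cite{BKM98,CMP08,BGKKW12}, using the $q$-gamma functional equations and theta-function addition formulas collected in Appendix~\ref{app:specialfun}; it is routine in principle but error-prone. A secondary, more structural point to handle carefully is the contour deformation and the justification that only the real saddle-type contributions survive — i.e. that when the delta function localizes, the pinned rapidities lie on (or can be deformed to) the real segment $[-\p/2,\p/2]$, with no extra contributions from poles of the integrand crossed during the deformation; here the oddness of $\e$ on the imaginary-shift contours and the monotonicity of $\e$ on $[0,\p/2]$ (which also gives the two-to-one structure producing the $k=0,1$ sum and the two edges) are what make the argument go through. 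Once these two points are settled, inserting $\om$-derivatives of $p$ via $v=v(\om)$ and simplifying with (\ref{ellipticcons}) yields (\ref{sigmaTwospinon}) directly.
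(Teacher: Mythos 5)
Your overall strategy---reduce the $\ell=1$ term to an explicit double integral over real rapidities, Fourier transform in time to get an energy delta function, and then track the special-function bookkeeping down to $B(r(\om))$---is the same as the paper's, and you correctly flag the $q$-Gau{\ss}-identity manipulations as the tedious part. But there is a genuine gap at the localization step. A single delta function $\de\bigl(\om+\e(z_1)+\e(z_2)\bigr)$ in a two-dimensional integral does \emph{not} pin both rapidities: it leaves a one-dimensional integral along the energy shell, and there is no reason that integral collapses to the closed form (\ref{sigmaTwospinon}), which depends on $\om$ only through the single variable $r(\om)$. What actually pins the second variable is a \emph{momentum} constraint. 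The paper obtains it by first extending the spatial sum to $\sum_{m\in\mathbb{Z}} C^{(2)}(m,t)$ (using the Hermiticity-plus-parity identity (\ref{jjconju}), which converts $2\sum_{m\ge0}-(m{=}0)$ into a two-sided sum), so that the full structure factor $S^{(2)}_{\cal J}(Q,\om)$ carries an explicit $\de_{2\p}\bigl(Q-p(z_1)-p(z_2)+\p k\bigr)$; at $Q=0$ this forces $p(z_1)+p(z_2)=\p$ (the $k=1$ sector), after which the change of variables to total and relative momentum makes the energy a monotone function $-\frac{h_\ell}{k'}\sqrt{1-k^2\sin^2\la}$ of the single remaining variable $\la$, and only \emph{then} does the energy delta finish the job. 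If you insist on starting from (\ref{curcurkzero}), where the $m$-sum has already been performed, the momentum delta is hidden in the boundary value of the $-\i\ctg$ factor of (\ref{ampsigma}): on deforming ${\cal C}_h,{\cal C}_p$ to the real axis you must keep the $\i 0$-prescription inherited from the convergence domain of the geometric series, and Sokhotski--Plemelj then splits $-\i\ctg(x\pm\i0)$ into a principal value (imaginary) plus a periodic delta comb (real); it is precisely that delta comb which supplies the missing constraint and which your sketch silently drops by treating $-\i\ctg$ as an ordinary multiplicative function on the real contour.

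A smaller but related slip: you identify the upper band edge as $\min(-\e)+\max(-\e)$, which equals $\tfrac{h_\ell}{2}(1+1/k')$, not $h_\ell/k'$. The correct statement is that at total momentum $\p$ the two-spinon energy ranges from $2\min(-\e)=h_\ell$ (both spinons at the band bottom) to $2\max(-\e)=h_\ell/k'$ (both at the band top); getting this right again requires having the momentum constraint in hand before reading off the thresholds.
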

The derivation of this result is discussed in Appendix \ref{app:twospincond}. In the above expression (\ref{sigmaTwospinon}), we identify two
van-Hove singularities at the upper and lower 2-spinon band edges,
see the last factor on the right hand side. They are both
canceled by $B\bigl(r (\om)\bigr)$ as $B(z)$ has double zeros
at integer multiples of $\pi$. As a result, $\Re \s^{(2)} (\om)$
has square-root singularities, both at the lower and the upper
2-spinon band edges, away from the isotropic point.

As a consistency test, we plot the spin conductivity obtained
from a numerical Fourier transformation of the $\ell=1$ parts
of $\bigl\< {\cal J}_1 (t) {\cal J}_{m+1} \bigr\>$ and the above
analytic result in Fig.~\ref{fig:TwosipnonconductivityD3}. The
two curves agree well except for very small frequencies $\om$.
Note that we perform the summation and the numerical integration in
(\ref{Resigma_L1}) without introducing any window functions or filters.
The overall factor $\omega^{-1}$ in (\ref{Resigma_L1}) could then
introduce an artificial instability. We nevertheless observe only a
small deviation from zero in the vicinity of $\om=0$. This indicates a
high accuracy of our spin-current correlation data.

\begin{figure}[!h]
\centering
{\includegraphics[width=8cm]{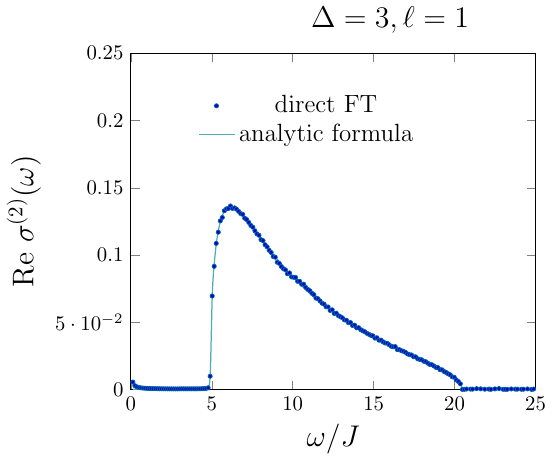}}
\caption{Comparison of the analytic result (\ref {sigmaTwospinon})
and a numerical Fourier transformation of the $\ell=1$ part of
$\bigl\< {\cal J}_1 (t) {\cal J}_{m+1} \bigr\>$ for anisotropy
$\Delta=3$. For the latter we use $0\le m \le 399$ and $0\le tJ \le 50$.
}
\label{fig:TwosipnonconductivityD3}
\end{figure}

With decreasing anisotropy $\Delta$, the peak position moves towards
$\omega=0$ while its height increases, see Fig.~\ref{fig:SigmaL1}.
\begin{figure}[!h]
\centering
{\includegraphics[width=8cm]{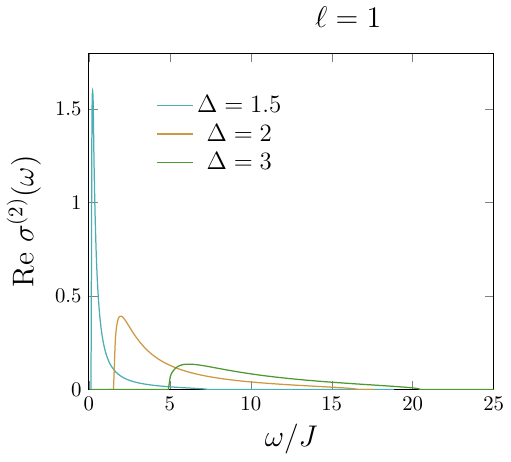}}
\caption{$\ell=1$ contribution, Eq.~(\ref{sigmaTwospinon}), to ${\rm Re}\, \sigma^{(2)}(\om)$ for various $\Delta$.}
\label{fig:SigmaL1}
\end{figure}
This is expected because the XXZ chain has a non-zero $T=0$
Drude weight in the isotropic limit \cite{ShSu90}. A short
discussion of the isotropic limit is presented in Appendix \ref{app:twospincond}.

\subsection{More than two spinons}\label{sec:4spinons}
For $\ell \ge 2$, simple analytic expressions like (\ref{sigmaTwospinon})
are not available. Nevertheless, as has already been mentioned, the
spin-current correlation function is enumerable for sufficiently large
$m$ and $t$. This enables us to perform a direct Fourier transformation,
at least in principle. In practice, calculating higher contributions is
time-consuming. Here we therefore only present a result for $\ell=2$,
which corresponds to the 4-spinon case. Inside the two-spinon band
the additional contribution from 4-spinon states is small, away from the
isotropic point. Above the 2-spinon upper band edge, however,
a small conductivity is entirely carried by the scattering states of
four and more spinons. For $\Delta=2,3$ the maximum of the $\ell=2$
contribution is located close to the upper 2-spinon band edge,
and we expect this contribution to be significant even near the
lower 2-spinon band edge as $\Delta \rightarrow 1$. 
An example for $\Delta=3$ is shown in Fig.~\ref{fig:conductivityD3L12}.
 \begin{figure}[!h]
\centering
{\includegraphics[width=8cm]{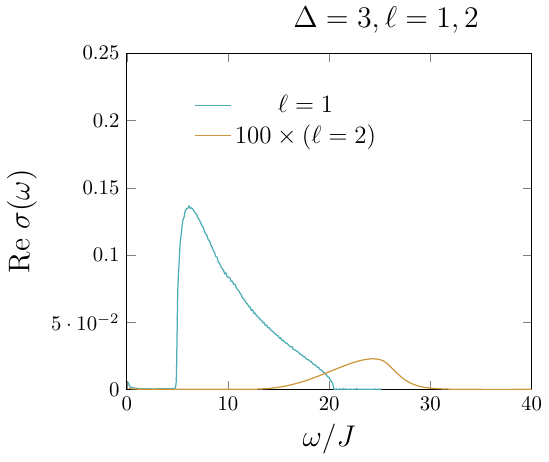}}
\caption{The $\ell = 1$ and $\ell = 2$ contributions to ${\rm Re}\,
\sigma(\om)$ for $\Delta=3$. Note that the contribution from $\ell=2$
is multiplied by a factor 100. For the Fourier transform for $\ell=2$
we use data for $\bigl\< {\cal J}_1 (t) {\cal J}_{m+1} \bigr\>$ with
$0\le m \le 39$ and $0 \le tJ \le 30$.}
\label{fig:conductivityD3L12}
\end{figure}

As a benchmark for the accuracy of our results we consider the $f$-sum
rule \cite{BariAdler},
\begin{equation}\label{sumrule}
     \int_0^\infty \rd \om \: \Re \s(\om) =
        - \lim_{L\rightarrow \infty}  \frac{\pi \langle H_0 \rangle}{2L} \epc
\end{equation}
where $H_0$ denotes the `kinetic part' of the Hamiltonian,
obtained from  (\ref{hxxz}) by setting $\Delta, h=0$. The
results of a numerical comparison of the left and right hand
side of the sum rule (\ref{sumrule}) are summarized in table
\ref {tab:sum_rule}. We find, in particular, that for the chosen
anisotropies the sum of the $\ell=1$ and $\ell=2$ terms is
already extremely close to the full weight.
\begin{table}[!h]
\begin{center}
\setlength{\tabcolsep}{1em}
\begin{tabular}{cccc}
\toprule
    $\Delta$&         lhs of  (\ref{sumrule}): $\ell=1$&   lhs of  (\ref{sumrule}): $(\ell=1)+(\ell=2)$&        rhs of  (\ref{sumrule})  \\
\midrule[1pt]
          $1.5$&    $1.56692$&     $1.64348$ &    $1.64394$\\
\midrule
          $2$&       $1.36065$&  $1.37615$&     $1.37624$ \\                    
\midrule
          $3$&       $0.987313$&    $0.989092$&     $0.989116$\\                    
\bottomrule
\end{tabular}
\caption{Both sides of the $f$-sum rule (\ref{sumrule}) for $\Delta=1.5, 2$ and 3 and $J=1$.}
\label{tab:sum_rule}
\end{center}
\end{table}

\section{Summary and Conclusions}
We have presented an exact thermal form factor expansion for the
dynamical current-current correlation function $\bigl\< {\cal J}_1 (t)
{\cal J}_{m+1} \bigr\>$ of the spin-1/2 XXZ chain in the massive
antiferromagnetic regime at zero temperature. In this expansion,
the correlation function is given as a sum over $\ell$ particle-hole
excitations or, equivalently, $2\ell$ spinon excitations. The formula
can, in principle, be evaluated numerically for arbitrary distances
$m$ and times $t$, leading to numerically exact results. We note,
in particular, that the series in the number of particle-hole
excitations $\ell$ converges fast, except for anisotropy $\Delta\to 1$.
The long-time, large-distance asymptotics is determined by the $\ell=1$
contribution. We attribute the fast convergence to the massive nature
of the involved excitations.

We have also provided a form factor series representation for
$\lim_{L\to\infty} \frac{1}{L}\sum_{j,m=1}^L \bigl\< {\cal J}_j (t)
{\cal J}_{m} \bigr\>$ which allows to calculate the real
part of the optical spin conductivity ${\rm Re}\, \sigma(\omega)$
by a direct Fourier transform. For the $\ell=1$ (2-spinon)
contribution the Fourier transform can be performed analytically,
leading to a closed form expression for the 2-spinon optical
conductivity. We find that ${\rm Re}\, \sigma(\omega)$ is finite
only within the 2-spinon band which starts at some finite frequency.
At both edges of the spinon band, the conductivity shows a
square-root behavior. By checking the $f$-sum rule, we have
shown that the $\ell=1$ and $\ell=2$ contributions account for
almost the entire spectral weight if we are not too close to the
isotropic point.

Another test of our form factor series for $\bigl\< {\cal J}_1 (t)
{\cal J}_{m+1} \bigr\>$ was provided by DMRG results. We also note that
the obtained ${\rm Re}\,\sigma(\om)$ looks quite similar to the finite
temperature results in Ref.~\cite{KKM14}, which were obtained by DMRG as
well, except for small frequencies. The Lorentzian-type peak around
$\om=0$ observed in this paper, which seems to decrease with increasing
$T$, therefore appears to be a genuine finite-temperature effect related
to the expected diffusive behavior. To understand the low-frequency
behavior better, it would therefore be of interest to extend our form
factor series expansion to finite temperatures.

This is not totally out of reach, since the thermal form factor
approach is a genuine finite-temperature method which only has been
used in the zero-temperature limit here to produce fully explicit
results. One of our future goals is indeed to keep the temperature
finite. For this purpose we will need better control of the
non-linear integral equations that describe the excited states of
the quantum transfer matrix. Simplifications should occur in
the high-temperature limit, where we have a rather complete
understanding \cite{GGKS20} of the involved auxiliary functions.

Further future goals include a proof of the convergence of
the series and an estimation of the truncation error. Given
the explicit nature of the integrands in our series and
the recent progress in related cases \cite{GKS20b,Kozlowski20app}
this may now appear within reach. We shall also work out
thermal form factor series expansions of general two-point
functions of spin zero operators and provide the details of
the proof of (\ref{curcur}), (\ref{lplh}) in a forthcoming
publication.

\section*{Acknowledgment}
The authors would like to thank Z. Bajnok, H. Boos, A. Kl\"umper, F. Smirnov
and A. Wei{\ss}e for helpful discussions. FG and JSi acknowledge
financial support by the German Research Council (DFG) in the framework
of the research unit FOR 2316. KKK is supported by CNRS Grant PICS07877
and by the ERC Project LDRAM: ERC-2019-ADG Project 884584. JSi
acknowledges support by the Natural Sciences and Engineering Research
Council (NSERC, Canada).  JSu is supported by a JSPS KAKENHI Grant
number 18K03452.

\begin{appendix}
\section{Special functions}
\label{app:specialfun}
In this appendix we gather the definitions of the special
functions needed in the main part of the text and list 
some of their properties.

We start with functions that can be expressed in terms of infinite
$q$-multi factorials which, for $|q_j| < 1$ and $a \in {\mathbb C}$,
are defined as
\begin{equation}
     (a;q_1, \dots, q_p) =
        \prod_{n_1, \dots, n_p = 0}^\infty (1 - a q_1^{n_1} \dots q_p^{n_p}) \epp
\end{equation}
A first set of such functions are the $q$-Gamma and $q$-Barnes
functions $\G_q$ and $G_q$,
\begin{subequations}
\begin{align}
     \G_q (x) & = (1 - q)^{1 - x} \frac{(q;q)}{(q^x;q)} \epc \\[1ex]
     G_q (x) & = (1 - q)^{- \2 (1 - x)(2 - x)} (q;q)^{x - 1}
                 \frac{(q^x;q,q)}{(q;q,q)} \epp
\end{align}
\end{subequations}
They satisfy the normalization conditions
\begin{equation}
     \G_q (1) = G_q (1) = 1
\end{equation}
and the basic functional equations
\begin{equation}
     [x]_q \G_q (x) = \G_q (x + 1) \epc \qd \G_q (x) G_q (x) = G_q (x + 1) \epc
\end{equation}
where $[x]_q = (1 - q^x)/(1 - q)$ is a familiar form of the $q$-number.

Closely related are the Jacobi theta functions $\dh_j (x) =
\dh_j (x|q)$, $j = 1, \dots, 4$. Setting $q = \re^{- \g}$
they can be introduced by
\begin{equation}
     \dh_4 (x|q) = (q^2;q^2) (e^{- 2\i x} q; q^2) (\re^{2 \i x} q; q^2)
\label{definition theta 4}     
\end{equation}
and
\begin{align}
     \dh_1 (x) & = - \i q^\4 \re^{\i x} \dh_4 (x + \i \g/2) \epc \qd
     \dh_2 (x) = q^\4 \re^{\i x} \dh_4 (x + \i \g/2 + \p/2) \epc
        \notag \\[1ex]
     \dh_3 (x) & = \dh_4 (x + \p/2) \epp
\label{definition theta 1 2 3}
\end{align}
The parameter $q$ of the theta functions is called `the nome'.
Sometimes we suppress their nome dependence, but only if the
value of $q$ is clear from the context.

The Jacobi theta functions are connected with the $q$-gamma functions
through the second functional relation of the latter,
\begin{equation}
     \frac{\dh_4 (x|q)}{\dh_4 (0|q)} =
        \frac{\G_{q^2}^2 \bigl(\tst{\2}\bigr)}
             {\G_{q^2} \bigl(\tst{\2 - \frac{\i x}{\g}}\bigr)
              \G_{q^2} \bigl(\tst{\2 + \frac{\i x}{\g}}\bigr)} \epp
\end{equation}
We shall also frequently employ the common notational convention for
the `theta constants', $\dh_j = \dh_j (0|q)$, $j = 2, 3, 4$,
$\dh_1' = \dh_1' (0|q)$.

Another class of functions needed in the main text are the
basic hypergeometric functions \cite{GaRa04}. They are defined
in terms of finite $q$ multi-factorials (or $q$-Pochhammer symbols),
\begin{equation}
    (a_1, \dots, a_k; q)_m = (a_1; q)_m (a_2; q)_m \dots (a_k; q)_m
    \epc \qd (a;q)_m = \prod_{k=0}^{m-1} (1 - a q^k) \epc
\end{equation}
by the infinite series
\begin{equation}
     _r \PH_s \Bigl(\begin{array}{@{}r@{}} a_1, \dots, a_r \\ b_1, \dots, b_s \end{array};
                     q, z\Bigr) =
        \sum_{k=0}^\infty \frac{(a_1, \dots, a_r; q)_k}{(b_1, \dots, b_s, q; q)}
           \Bigl((-1)^k q^\frac{k(k-1)}{2}\Bigr)^{s + 1 - r} z^k \epp
\end{equation}

\section{The spin conductivity of the XXZ chain}
\label{app:spinconduct}
In this appendix we recall the derivation of several alternative
formulae for the `spin conductivity'.

\subsection{Gauge fields coupling to the Hamiltonian}
We decompose the Hamiltonian (\ref{hxxz}) as
\begin{equation}
     H = H_0 + \D H_I - h S^z \epc
\end{equation}
where
\begin{equation}
     H_0 = 2 J \sum_{j=1}^L \bigl(\s_{j-1}^+ \s_j^- + \s_{j-1}^- \s_j^+\bigr) \epc \qd
     H_I = J \sum_{j=1}^L \bigl(\s_{j-1}^z \s_j^z - 1\bigr) \epp
\end{equation}
Under a Jordan-Wigner transformation the operator $H_0$ goes to
a tight-binding type Hamiltonian, while $H_I$ becomes a nearest-neighbour
density-density interaction. In the Fermion picture, it is $H_0$ which
couples to an external electro-magnetic field via so-called Peierls
phases which can be understood as a manifestation of a $U(1)$ gauge
field. For details see e.g.\ Chapter~1.3 of the book \cite{Thebook}.
In the spin-chain picture, switching on an external field means to
replace
\begin{equation}
     \s_j^- \rightarrow \re^{\i \ph_j (t)} \s_j^- \epc \qd
     \s_j^+ \rightarrow \re^{- \i \ph_j (t)} \s_j^+ \epc
\end{equation}
where $t$ is the time variable. We shall restrict ourselves
to a spatially homogeneous field (`the case of long wave length'),
\begin{equation}
     \ph_j (t) - \ph_{j-1} (t) = \la (t) \epp
\end{equation}
Then $H_0$ turns into
\begin{equation}
     H_\la = 2 J \sum_{j=1}^L
        \bigl(\re^{\i \la(t)} \s_{j-1}^+ \s_j^- +
	      \re^{- \i \la(t)} \s_{j-1}^- \s_j^+\bigr) \epp
\end{equation}

By analogy with the electro-magnetic case we shall assume that the
gauge field is related to the `electric field' $E$ as
\begin{equation}
     \6_t \la (t) = - e a E(t) \epc
\end{equation}
where $e$ is a unit charge and $a$ a unit length (`lattice spacing').
This implies the relation
\begin{equation} \label{lafef}
     \la_F (\om) = - \frac{\i e a}{\om} E_F (\om) \qd \text{with} \qd 
     \la_F (\om) = \int\limits_{\mathbb{R}}^{} \rd t \, \re^{\i \om t} \la(t)
\end{equation}
for the Fourier transforms. We shall consider a class of fields
$\la$ for which $|\la (t)| \le \re^{\e t}$ for $t \rightarrow
- \infty$ and $|\la (t)| \le c t$ for $t \rightarrow \infty$,
where $\e, c > 0$. The first condition is compatible with an
adiabatic switching on of the field and the second one admits 
`electric fields' which are asymptotically constant and allow 
us to probe the dc conductivity. For such fields the Fourier
transform $\la_F (\om)$ exists within a strip $0 < \Im \om < \e$
and should be interpreted as a `$+$-boundary value' on the real axis.
\subsection{Current operators}
An external `electric field' will induce a current into a wire.
Let us recall the construction of the corresponding current
operator.

We start with the definition of the operator of the time
derivative of a physical quantity in the Schr\"odinger picture.
The Schr\"odinger equation,
\begin{equation} \label{timeevolution}
     \i \6_t U(t) = H(t) U(t) \epc \qd U(0) = \id \epc
\end{equation}
determines the time evolution operator $U(t)$ for a system
with generally time dependent Hamiltonian $H(t)$. If $A$ is
any operator in the Schr\"odinger picture, then the corresponding
operator $A_H$ in the Heisenberg picture is
\begin{equation} \label{heispic}
     A_H (t) = U^{-1} (t) A U(t) \epp
\end{equation}
Equations (\ref{timeevolution}) and (\ref{heispic}) imply that
\begin{equation}
     \i \6_t A_H (t) = - U^{-1} (t) [H(t), A] U(t) 
\end{equation}
or
\begin{equation}
     U(t) (\6_t A_H (t)) U^{-1} (t) = \i [H(t), A] = \dot A \epp
\end{equation}
We may think of this equation as defining the time derivative
$\dot A$ of $A$ in the Schr\"odinger picture.

Applying this to the local magnetization $\2 \s^z$ and $H(t)
= H_\la + \D H_I - h S^z$ we obtain
\begin{equation} \label{szconti}
     \tst{\2} \dot \s_j^z = - J_{j+1} (t) + J_j (t) \epc
\end{equation}
where
\begin{equation}
     J_j (t) = 2 \i J 
        \bigl(\re^{\i \la(t)} \s_{j-1}^+ \s_j^- -
	      \re^{- \i \la(t)} \s_{j-1}^- \s_j^+\bigr) \epp
\end{equation}
Equation (\ref{szconti}) has the form of a continuity equation for
the local magnetization. For this reason $J_j (t)$ is interpreted
as the density of the spin current.

Let
\begin{equation}
     {\cal J} = \sum_{m=1}^L J_m (0) = \sum_{m=1}^L {\cal J}_m \epp
\end{equation}
Then the total magnetic current is the sum
\begin{equation} \label{jtlinlambda}
     J(t) = \sum_{m=1}^L J_m (t) = {\cal J} - \la(t) H_0
            + {\cal O} (\la^2)
\end{equation}
and the time dependent Hamiltonian has the expansion
\begin{equation} \label{htlinlambda}
     H_\la + \D H_I - h S^z = H + \la(t) {\cal J} + {\cal O} (\la^2) \epp
\end{equation}
The latter two equations are all we need in order to calculate
the average current induced by the external field within the linear
response theory. The small time dependent perturbation we can
read off from (\ref{htlinlambda}) is $V(t) = \la(t) {\cal J}$.

\subsection{Linear response of the current}
We denote the density matrix of the canonical ensemble by $\r_c$
and the density matrix obtained by time evolving $\r_c$ with
$H_\la$ by $\r (t)$. Then the linear response formula
\begin{equation}
     \tr \bigl\{\bigl(\r(t) - \r_c\bigr) J(t)\bigr\}
        = - \i \int_{- \infty}^t \rd t' \:
	       \bigl\<[(J(t))_H (t - t'), V(t')]\bigr\>_T
\end{equation}
determines the averaged current to linear order in $V$ (for a
concise derivation see e.g.\ Section~L.22 of \cite{Goehmann21}).
Inserting here (\ref{jtlinlambda}) and (\ref{htlinlambda}) we
obtain
\begin{equation} \label{jtlinresponse}
     \tr \bigl\{\r(t) J(t)\bigr\} =
	- \<H_0\>_T \la (t)
        - \i \int_{- \infty}^\infty \rd t' \: \Th (t - t')
	     \bigl\<[{\cal J} (t - t'), {\cal J}]\bigr\>_T \la (t')
	     + {\cal O} (\la^2) \epp
\end{equation}
In this equation $\Th$ is the Heaviside step function and
${\cal J} (t)$ denotes the total current ${\cal J}$ in the
Heisenberg picture that is evolved with respect to the
unperturbed Hamiltonian $H$. We have made use of the fact that
$\<{\cal J}\>_T = 0$ due to the invariance of the XXZ
Hamiltonian under parity transformations.

The `experimentally relevant quantity' is the Fourier transformed
current per volume which in physical units is given by
\begin{equation} \label{defjpervol}
     {\cal J}_F (\om) =
        - ea \int_{- \infty}^\infty \rd t \: \re^{\i \om t}
	   \frac{\tr \bigl\{\r(t) J(t)\bigr\}}{a^3 L} \epp
\end{equation}
Due to the remark below (\ref{lafef}), the integral
on the right hand side is to be interpreted as a $+$-boundary
value if $\om$ is real.

If we substitute (\ref{jtlinresponse}) into (\ref{defjpervol}), 
use the convolution theorem as well as (\ref{lafef}) and neglect
all terms of quadratic oder in $\la$ or higher, we arrive at
`Ohm's law',
\begin{equation}
     {\cal J}_F (\om) = \frac{e^2}{a} \s_L (\om) E_F (\om) \epc
\end{equation}
where $\s_L (\om)$ is the specific optical conductivity,
\begin{equation} \label{sidiss}
     \s_L (\om) = \frac{1}{L (\om + \i 0)} \biggl\{- \i \<H_0\>_T
        + \int_0^\infty \rd t \: \re^{\i \om t}
	   \bigl<[{\cal J} (t), {\cal J}]\bigr\>_T \biggr\} \epp 
\end{equation}
Assuming $\bigl<[{\cal J} (t), {\cal J}]\bigr\>_T$ to be bounded
for $t \rightarrow + \infty$ we see that the right hand side
of (\ref{sidiss}) is a holomorphic function of $\om$ in the
upper half plane. This implies that the real part and the imaginary
part of the optical conductivity are not independent, but are
connected by the Kramers-Kronig relation.

\subsection{Real part of the optical conductivity}
For this reason we can focus our attention on the real part of
the conductivity. We wish to rewrite it in a form appropriate
for taking the thermodynamic limit. We basically follow the
arguments given in \cite{Sirker20} and start by switching
to a spectral representation of the integral on the right hand
side of (\ref{sidiss}). Employing the notation
\begin{equation}
     Z_\la = \tr \Bigl\{\re^{- \frac{1}{T}(H_\la + \D H_I - h S^z)}\Bigr\} \epc \qd
     p_n = \frac{\re^{- \frac{E_n}{T}}}{Z_0} \epc \qd
     \om_{mn} = E_m - E_n \epc
\end{equation}
where the $E_n$ are the eigenvalues of the Hamiltonian (\ref{hxxz})
with corresponding eigenstates $|n\>$, the spectral representation takes the form
\begin{equation}
     \int_0^\infty \rd t \: \re^{\i \om t} \bigl<[{\cal J} (t), {\cal J}]\bigr\>_T
        = \i \sum_{\substack{m, n\\E_m \ne E_n}}
	     \frac{p_n - p_m}{\om - \om_{mn} + \i 0}
	     \bigl|\<m|{\cal J}|n\>\bigr|^2 \epp
\end{equation}
Now, if $E_m \ne E_n$,
\begin{equation}
     \frac{1}{\om + \i 0} \cdot \frac{1}{\om - \om_{mn} + \i 0}
        = \frac{1}{\om_{mn}}
	  \biggl(\frac{1}{\om - \om_{mn} + \i 0}
	         - \frac{1}{\om + \i 0}\biggr) \epp
\end{equation}
Using this identity as well as the Plemelj formula $1/(\om + \i 0)
= - \i \p \de(\om) + {\cal P}(1/\om)$ we obtain the spectral
representation
\begin{multline} \label{reomspectral}
     \Re \s_L (\om) = \frac{\p}{L} \biggl\{- \<H_0\>_T
        + \sum_{\substack{m, n\\E_m \ne E_n}}
	     \frac{p_m - p_n}{\om_{mn}}
	     \bigl|\<m|{\cal J}|n\>\bigr|^2 \biggr\} \de(\om) \\
        - \frac{\p}{L} \sum_{\substack{m, n\\E_m \ne E_n}}
	     \frac{p_m - p_n}{\om_{mn}}
	     \bigl|\<m|{\cal J}|n\>\bigr|^2 \de(\om - \om_{mn}) \epp
\end{multline}
This representation immediately implies the f-sum rule (\ref{sumrule}).

Now notice that the free energy per lattice site,
\begin{equation}
     f(\la) = - \frac{T}{L} \ln Z_\la \epc
\end{equation}
satisfies \cite{GiamarchiShastri} the relation
\begin{equation} \label{meissnerfrac}
     \6_\la^2 f(\la)\bigr|_{\la = 0} = - \frac{\<H_0\>_T}{L}
        + \frac{1}{L} \sum_{\substack{m, n\\E_m \ne E_n}}
	   \frac{p_m - p_n}{\om_{mn}} \bigl|\<m|{\cal J}|n\>\bigr|^2
        - \frac{1}{TL} \sum_{\substack{m, n\\E_m = E_n}}
	                  p_m \bigl|\<m|{\cal J}|n\>\bigr|^2 \epp
\end{equation}
This quantity is the so-called Meissner fraction. It vanishes in the
thermodynamic limit \cite{GiamarchiShastri}, which follows from the
fact that the effect of the external field $\la$ is equivalent to
a mere twist of the boundary conditions of the original Hamiltonian
(\ref{hxxz}). Inserting (\ref{meissnerfrac}) into (\ref{reomspectral})
and switching back from a spectral representation to a Fourier
integral we obtain
\begin{multline}
     \Re \s_L (\om) = \p \de(\om) \6_\la^2 f(\la)\bigr|_{\la = 0} +
        \frac{1 - \re^{- \frac{\om}{T}}}{2 \om L}
	   \int_{- \infty}^\infty \rd t \: \re^{\i \om t}
	     \bigl<{\cal J} (t) {\cal J}\bigr\>_T \\ =
        \p \de(\om) \6_\la^2 f(\la)\bigr|_{\la = 0} +
        \frac{\re^{\frac{\om}{T}} - 1}{2 \om L}
	   \int_{- \infty}^\infty \rd t \: \re^{- \i \om t}
	     \bigl<{\cal J} (t) {\cal J}\bigr\>_T \epp
\end{multline}
From here it is obvious that $\Re \s_L (\om)$ is even. Since the
Meissner fraction vanishes in the thermodynamic limit, we obtain
the formula
\begin{equation} \label{resigfluctform}
     \Re \s (\om) = \lim_{L \rightarrow \infty} \Re \s_L (\om) =
        \frac{1 - \re^{- \frac{\om}{T}}}{2 \om}
	   \int_{- \infty}^\infty \rd t \: \re^{\i \om t}
	     \lim_{L \rightarrow \infty}
	     \frac{\bigl<{\cal J} (t) {\cal J}\bigr\>_T}{L}
\end{equation}
that is used in the main text.

\section{Two-spinon contribution}
\label{app:twospincond}
\subsection{Two-spinon dynamical structure function}
Defining
\begin{equation} \label{defsj}
     S_{\cal J}^{(2 \ell)} (Q, \om) =
        \sum_{m = - \infty}^\infty \int_{- \infty}^\infty \rd t \:
        \re^{\i(Q m + \om t)} C^{(2 \ell)} (m, t) 
\end{equation}
the function
\begin{equation}
     S_{\cal J} (Q, \om) = \sum_{\ell = 1}^\infty S_{\cal J}^{(2 \ell)} (Q, \om)
\end{equation}
is called the dynamical structure function of the local magnetic
currents. In the following we shall obtain an explicit expression
for the one-particle one-hole term $S_{\cal J}^{(2)} (0, \om)$.

For this purpose we start with a close inspection and simplification
of the amplitude
\begin{equation} \label{amp1}
     {\cal A}_{\cal J}^{(2)} (u,v|k) = 
        \biggl(\frac{(\e(u) - \e(v)) \dh_1'}
	            {4 \sin(u - v) \dh_1 (\Si)}\biggr)^2
        \frac{\Xi^2 (0) {\cal M}\, \hat {\cal M}}
	     {\Xi(u - v) \Xi(v - u)} \epc
\end{equation}
where
\begin{subequations}
\begin{align}
     \Si & = - \2 (u - v + \p k) \epc \qd
     \Xi (z) = \frac{\G_{q^4} \bigl(\tst{\2 + \frac{z}{2 \i \g}}\bigr)}
                      {\G_{q^4} \bigl(\tst{1 + \frac{z}{2 \i \g}}\bigr)}
                 \frac{G_{q^4}^2 \bigl(\tst{1 + \frac{z}{2 \i \g}}\bigr)}
                      {G_{q^4}^2 \bigl(\tst{\2 + \frac{z}{2 \i \g}}\bigr)} \epc \\[.5ex]
     \label{phnull}
     {\cal M} & = \overline{\PH}_1 (P; 0) - \PH_1 (P; 0)
                                          \frac{\Ph^{(-)} (v)}{\Ph^{(+)} (v)} \epc \qd
     \hat {\cal M} = \hat \PH_1 (H; 0) - \hat{\overline{\PH}}_1 (H; 0)
                                         \frac{\Ph^{(-)} (u)}{\Ph^{(+)} (u)}
\end{align}
\end{subequations}
with $H = \re^{2 \i u}, P = \re^{2 \i v}$ and
\begin{subequations}
\begin{align}
     \frac{\Ph^{(-)} (v)}{\Ph^{(+)} (v)} & = \frac{\Ph^{(-)} (u)}{\Ph^{(+)} (u)} =
        \re^{- 2 \i \Si}
	\frac{\G_{q^4} \bigl(\tst{\2 + \frac{\i (v - u)}{2 \g}}\bigr)
	      \G_{q^4} \bigl(\tst{1 - \frac{\i (v - u)}{2 \g}}\bigr)}
	     {\G_{q^4} \bigl(\tst{\2 - \frac{\i (v - u)}{2 \g}}\bigr)
	      \G_{q^4} \bigl(\tst{1 + \frac{\i (v - u)}{2 \g}}\bigr)} \epc \\[1ex]
     \PH_1 (P; 0) & = \hat{\overline{\PH}}_1 (H; 0) =
        \phantom{}_2 \PH_1
        \biggl(\begin{array}{@{}r}
	          q^{-2}, P/H \\ q^2 P/H
	       \end{array}
	       ;q^4, q^4 \biggr) \epc \\
     \overline{\PH}_1 (P; 0) & = \hat \PH_1 (H; 0) =
        \phantom{}_2 \PH_1
        \biggl(\begin{array}{@{}r}
	          q^{-2}, H/P \\ q^2 H/P
	       \end{array}
	       ;q^4, q^4 \biggr) \epp
\end{align}
\end{subequations}
Using the $q$-Gau{\ss} identity \cite{GaRa04},
\begin{equation} \label{qgauss}
     \phantom{}_2 \PH_1
        \biggl(\begin{array}{@{}r}
	          q^{-2}, H/P \\ q^2 H/P
	       \end{array}
	       ;q^4, q^4 \biggr) =
     \frac{\G_{q^4} \bigl(\tst{\2 + \frac{\i (v - u)}{2 \g}}\bigr)}
          {\G_{q^4} (\tst{\2}) \G_{q^4} \bigl(\tst{1 + \frac{\i (v - u)}{2 \g}}\bigr)} \epc
\end{equation}
as well as the functional equations for the $q$-gamma and $q$-Barnes
functions, the amplitude can be rewritten as
\begin{multline} \label{amp12}
     {\cal A}_{\cal J}^{(2)} (u, v + \i \g|k) =
        \biggl(\frac{\e(u) + \e(v)}{2}\biggr)^2
        \frac{(-1)^k q^\2 \tg \bigl(\2 (u - v - \i \g + \p k)\bigr)}
	       {2 \sin(u - v)} \\[-.5ex]
	\times B(u - v)
        \biggl(\frac{\dh_1'}{\dh_4 \bigl(\2 (u - v + \p k)\bigr)}\biggr)^2 \epc
\end{multline}
where $B(z)$ was defined in equation (\ref{randb}) of the main text.
Note that $B(z)$ has a double zero at $z = 0$. Hence, the simple
pole at $u = v$ stemming from the sine function is canceled by a
double zero of $B(u - v)$.

For this reason we can write
\begin{equation} \label{ctwosym}
     C^{(2)} (m, t) =
        \sum_{k = 0, 1}
	   \int_{- \frac{\p}{2}}^\frac{\p}{2} \frac{\rd z_1}{2\p}
	   \int_{- \frac{\p}{2}}^\frac{\p}{2} \frac{\rd z_2}{2\p} \:
	   {\cal A}^{(2)}_s (z_1, z_2|k)
	   \re^{\i m k \p + \i \sum_{j=1}^2 (t \e (z_j) - m p(z_j))} \epc
\end{equation}
where
\begin{multline} \label{atwosym}
     {\cal A}^{(2)}_s (z_1, z_2|k) = \2
        \bigl( {\cal A}_{\cal J}^{(2)} (z_1, z_2 + \i \g|k) +
               {\cal A}_{\cal J}^{(2)} (z_2, z_1 + \i \g|k)\bigr) \\[.5ex] =
	\frac{q^\2}{2} \biggl(\frac{\e(z_1) + \e(z_2)}{2}\biggr)^2
        \frac{B(z_1 - z_2)} {\D + (-1)^k \cos(z_1 - z_2)}
        \frac{(\dh_1')^2}{\dh_4^2 \bigl(\2 (z_1 - z_2 + \p k)\bigr)} \epp
\end{multline}
It follows that
\begin{multline}
     S_{\cal J}^{(2)} (Q, \om) = 
        \sum_{k = 0, 1}
	   \int_{- \frac{\p}{2}}^\frac{\p}{2} \rd z_1
	   \int_{- \frac{\p}{2}}^\frac{\p}{2} \rd z_2 \:
           {\cal A}^{(2)}_s (z_1, z_2|k) \\ \times
	   \de_{2\p} \bigl(Q - p(z_1) - p(z_2) + \p k\bigr)
	   \de \bigl(\om + \e(z_1) + \e(z_2)\bigr) \epc
\end{multline}
where $\de_{2 \p}$ is a $2 \p$-periodic delta function.

We now substitute
\begin{equation}
     \begin{pmatrix} z_1 \\ z_2 \end{pmatrix} \mapsto
        \begin{pmatrix} \la \\ P \end{pmatrix} =
	\begin{pmatrix} \2 \bigl(p(z_1) - p(z_2)\bigr)
	                \\ p(z_1) + p(z_2) \end{pmatrix} \epp
\end{equation}
For the substitution recall \cite{DGKS15a} that $p(x)$ is
monotonically increasing on $[-\p/2,\p/2]$ with $p(-\p/2) = 0$,
$p(\p/2) = \p$. Furthermore, the inverse function can be
written as
\begin{equation}
     p^{-1} (y) = - \frac{\p}{2 K} \arcsn \bigl(\cos(y)\big|k \bigr) \epc
\end{equation}
where $k$ is the elliptic module and $K$ the complete elliptic
integral (see (\ref{ellipticcons})). Setting
\begin{equation}
     A(P, \la|k) =
        \frac{{\cal A}^{(2)}_s \bigl(p^{-1} (P/2 + \la),
	       p^{-1} (P/2 - \la)\big|k\bigr)}
	     {p' \bigl(p^{-1} (P/2 + \la)\bigr)
	      p' \bigl(p^{-1} (P/2 - \la)\bigr)}
\end{equation}
we obtain
\begin{multline}
     S_{\cal J}^{(2)} (Q, \om) = 
        \sum_{k = 0, 1}
	   \biggl\{ \int_0^\p \rd P \int_{- \frac{P}{2}}^\frac{P}{2} \rd \la
	            + \int_{  \p}^{2 \p} \rd P
		      \int_{  -\p + \frac{P}{2}}^{\p - \frac{P}{2}} \rd \la
		      \biggr\} A (P, \la|k) \\[1ex] \times
	   \de_{2\p} \bigl(Q - P + \p k\bigr)
	   \de \bigl(\om + \e(p^{-1} (P/2 + \la)) + \e(p^{-1}(P/2 - \la))\bigr) \epp
\end{multline}
In the latter equation the $P$ integration is now trivial.
For $Q \in (0, \p)$ we obtain
\begin{multline} \label{sqqlesspi}
     S_{\cal J}^{(2)} (Q, \om) = 
        \int_{- \frac{Q}{2}}^\frac{Q}{2} \rd \la \: A (Q, \la|0)
	   \de \bigl(\om + \e(p^{-1} (Q/2 + \la)) + \e(p^{-1}(Q/2 - \la))\bigr) \\ +
        \int_{- \frac{\p - Q}{2}}^\frac{\p - Q}{2} \rd \la \: A (Q + \p, \la|1)
	   \de \bigl(\om + \e(p^{-1} (\tst{\frac{ Q + \p}{2} + \la}))
	                 + \e(p^{-1}(\tst{\frac{Q + \p}{2} - \la}))\bigr) \epp
\end{multline}

In the limit $Q \rightarrow 0$ the first integral can at most
contribute to the value of $S_{\cal J}^{(2)}$ at the single point $(0,-2\varepsilon(-\pi/2))$. We shall
ignore this irregular contribution. Taking into account that
$A (P, \la|k) = A(P, - \la|k)$ we see that at all other points
\begin{equation}
     S_{\cal J}^{(2)} (0, \om) = 
        2 \int_0^\frac{\p}{2} \rd \la \: A (\p, \la|1)
	   \de \bigl(\om + \e(p^{-1} (\tst{\frac{\p}{2} + \la}))
	                 + \e(p^{-1}(\tst{\frac{\p}{2} - \la}))\bigr) \epp
\end{equation}
Further noticing that
\begin{equation}
     \e(p^{-1} (\tst{\frac{\p}{2} \pm \la})) =
        - \frac{h_\ell}{2 k'} \sqrt{1 - k^2 \sin^2 (\la)} \quad \text{and} \quad 
\e(\la)=-2J\sinh( \gamma) p^{\prime}(\la) \;, 
\end{equation}
see (A.11) of \cite{DGKS15b} and (A.19) of \cite{DGKS15a} for
more details,\footnote{One should incorporate $2\pi$ into
$p^{\prime}$ appearing in these works due to the different 
conventions we use here for the dressed momentum.} we can
readily calculate the remaining integral. Using (\ref{randb})
we arrive at
\begin{equation}
\label{SJ}
     S_{\cal J}^{(2)} (0, \om) = \frac{q^\2 h_\ell^2 k}{4 k'}
        \frac{B\bigl(r (\om)\bigr)}{\D - \cos\bigl(r(\om)\bigr)}
	\frac{\dh_3^2}{\dh_3^2 \bigl(r(\om)/2\bigr)}
	\frac{\om}{\sqrt{\bigl((h_\ell/k')^2 - \om^2\bigr)
	                 \bigl(\om^2 - h_\ell^2\bigr)}}
\end{equation}
for $\om \in [h_\ell, h_\ell/k']$. Outside this interval the function
$S_{\cal J}^{(2)} (0, \om)$ vanishes.

 The first integral on the right hand side of (\ref{sqqlesspi})
can at most contribute to $S_{\cal J}^{(2)} (0,\om)$ at $\om = - 2 \, \e (p^{-1} (0))
= h_\ell$, which means exactly at the lower band edge.

\subsection{Two-spinon optical conductivity}
We would like to connect the two-spinon contribution to the
structure function with the real part of the optical conductivity.
For this purpose we first note that
\begin{equation} \label{jjconju}
     \bigl\< {\cal J}_1 (t) {\cal J}_{m+1} \bigr\> =
        \Big( \bigl\< {\cal J}_1 (- t) {\cal J}_{m+1} \bigr\> \Big)^* \epp
\end{equation}
In order to see this we start with a finite chain of length $L$
for which
\begin{multline} \label{jjconjuproof}
     \Big( \bigl\< {\cal J}_1 (- t) {\cal J}_{m+1} \bigr\> \Big)^* =
        \bigl\< ({\cal J}_1 (- t) {\cal J}_{m+1})^\dagger \bigr\> =
        \bigl\< {\cal J}_{m + 1} {\cal J}_1 (-t) \bigr\> \\[1ex] =
        \bigl\< {\cal J}_{m + 1} (t) {\cal J}_1 \bigr\> =
        \bigl\< {\cal J}_L (t) {\cal J}_{L - m} \bigr\> =
        \bigl\< {\cal J}_1 (t) {\cal J}_{m + 1} \bigr\> \epp
\end{multline}
Here we have used the invariance under parity transformations
in the last equation. Equation (\ref{jjconjuproof}) holds for
every finite $L$, hence also in the thermodynamic limit.

Now (\ref{jjconju}) implies
\begin{equation}
     \bigl\< {\cal J}_1 (t) {\cal J}_{m+1} \bigr\> =
        \sum_{\ell = 1}^\infty \Big( C^{(2 \ell)} (m, - t) \Big)^* \epp
\end{equation}
Thus, the two-spinon contribution to the correlation function
of the total currents per lattice site is
\begin{multline}
     \biggl(2 \sum_{m = 0}^\infty \bigl\< {\cal J}_1 (t) {\cal J}_{m+1} \bigr\>_T
            - \bigl\< {\cal J}_1 (t) {\cal J}_1 \bigr\>_T \biggr)^{(2)} \\ =
	    \sum_{m = 0}^\infty C^{(2)} (m, t) +
	    \sum_{m = 1}^\infty  \Big( C^{(2)} (m, - t) \Big)^*
	    = \sum_{m = - \infty}^\infty C^{(2)} (m, t) \epc
\end{multline}
as can be seen from taking the complex conjugation of the explicit
expression \eqref{ctwosym}. Hence, with (\ref{redyncond}) and (\ref{defsj}),
\begin{equation}
     \Re \s^{(2)} (\om) = \frac{S_{\cal J}^{(2)} (0,\om)}{2 \om}
\end{equation}
which is valid for all $\om > 0$ and $T = 0$.
Lemma~\ref{lem:twoconduct} and Eq.~\eqref{sigmaTwospinon} in the main text therefore follow directly from Eq.~\eqref{SJ}.

\subsection{The isotropic limit: $\ell=1$}
We consider ${\rm Re}\, \sigma^{(2)}(\om)$  near the lower 2-spinon
band edge in the isotropic limit $\gamma \rightarrow 0$.
A convenient parameter in this limit is $q'={\rm e}^{-\frac{\pi^2}{2\gamma}}$
which approaches zero quickly. The energy gap $\Delta \varepsilon$
in the absence of a magnetic field is
\[
     \Delta \varepsilon = - \varepsilon(\frac{\pi}{2})
        \sim  \frac{8 J \pi \sh \gamma}{\gamma} q' \epp
\]
Our main goal is to show that the peak location $\om^*$
of ${\rm Re}\, \sigma^{(2)}$ is parameterized as $\om^* \sim C q'$
while the corresponding height is  given by ${\rm Re}\, \sigma^{(2)}(\om^*) \sim 
C^{\prime} /q'$ for some constants $C, C^{\prime}$.

The various constants behave in this limit as follows,
\begin{align*}
     k & \sim 1 & k' & \sim 4  q' \\
     K(k) & \sim \frac{\pi^2}{2\gamma} &
     h_{\ell} & \sim \frac{16 J \pi \sh \gamma}{\gamma} q' \epp
\end{align*}
The upper 2-spinon band edge $\frac{h_{\ell}}{k'}$ thus
reaches $4 J \pi $. 

Note that  $r(\om)=\pi$ at the lower 2-spinon band edge,
$\om=h_{\ell}$.  Then we conveniently parameterize
\begin{align}
     r(\om) & = \pi - \frac{\gamma}{\pi} \epsilon_r &
     \om &=h_{\ell}(1+\epsilon') \epp
\end{align}
We assume that $\epsilon_r, \epsilon'$ are  $O(1)$.
They are not independent but constrained by (\ref{eq:def_r}),
\[
     \sn(K + \frac{K}{\pi^2} \gamma \epsilon_r|k)
        = {\rm cd} (\frac{K}{\pi^2}  \gamma  \epsilon_r|k)
	= \frac{\sqrt{(\frac{h_{\ell}}{k'})^2 - \om^2}}{\frac{h_{\ell}k}{k'}} \epp
\]

By expanding both sides up to $O((q')^2)$, we find that $\epsilon_r$
and $\epsilon'$  are related by
\[
     \ch \epsilon_r -1 =4( \epsilon' + \frac{(\epsilon')^2}{2}) \epp
\]
In particular, when both of them are infinitesimally small, $\epsilon_r
\sim 2\sqrt{2\epsilon'}$. This essentially explains the
$\sqrt{\om^2-h_{\ell}^2}$ behaviour of ${\Re}\, \sigma^{(2)}(\om)$
for generic $\gamma$.

Now that $B(\pi+z)=B(z)$, we have in this limit, 
\[
     B(r(\om)) = B(\frac{\gamma \epsilon_r}{\pi})
        \sim \frac{1}{G^4(\frac{1}{2})}
        \frac{\epsilon_r}{2\pi^2} \sh \frac{\epsilon_r}{2}
        \prod_{\sigma=\pm1} \frac{G^2(1 + \frac{\sigma\epsilon_r}{2i\pi})}
	                         {G(\frac{1}{2}+ \frac{ \sigma\epsilon_r}{2i\pi})
				  G(\frac{3}{2}+   \frac{ \sigma\epsilon_r}{2i\pi})} \epc
\]
where $G$ is the (undeformed)  Barnes $G$ function.
The limits of the other factors in (\ref{sigmaTwospinon}) are
easily expressed in terms of $\epsilon_r$,
\begin{align*}
     \frac{\dh_3^2}{\dh_3^2 \bigl(r(\om)/2\bigr)} &
     \sim \frac{1}{4 q' \ch^2 \frac{\epsilon_r}{2}} \\
     \frac{1}{\sqrt{\bigl((h_\ell/k')^2 - \om^2\bigr)
              \bigl(\om^2 - h_\ell^2\bigr)}} &
	      \sim  \frac{k'}{h^2_{\ell}k \sh \frac{\epsilon_r}{2}} \\
     \frac{1}{\Delta -\cos(r(\om))} & \sim \frac{1}{2} \epp
\end{align*}

All in all, ${\rm Re}\, \sigma^{(2)}(\om)$ behaves near $h_{\ell}$
in the rational limit as
\begin{align*}
     {\rm Re}\, \sigma^{(2)}(\om)  \sim
        \frac{1}{128 q' G^4(\frac{1}{2}) \pi^2} {\cal F}(\epsilon_r) \epc \quad
	{\cal F}(x) =\frac{x}{\ch^2\frac{x}{2}}
	\prod_{\sigma=\pm1} \frac{G^2(1 + \frac{ \sigma x}{2i\pi})}
	                         {G(\frac{1}{2} + \frac{ \sigma x}{2i\pi})
				  G(\frac{3}{2} + \frac{ \sigma x}{2i\pi})} \epp
\end{align*}
Numerically, ${\cal F}(\epsilon_r)$ has a maximum at $\epsilon_r\sim 1.3508$
and the corresponding peak location is $\om^* \sim 1.2369 h_{\ell} $.
Therefore we conclude that  ${\rm Re}\, \sigma^{(2)}(\om^*)$ behaves as
$1/q'$, while $\om^*$ behaves as $q'$. The above argument only takes into
account the contribution from the $\ell=1$ sector but we expect that the higher
excitations do not alter the qualitative behavior in this limit.

\end{appendix}


\begin{thebibliography}{10}
\providecommand{\url}[1]{\texttt{#1}}
\providecommand{\urlprefix}{URL }
\expandafter\ifx\csname urlstyle\endcsname\relax
  \providecommand{\doi}[1]{doi:\discretionary{}{}{}#1}\else
  \providecommand{\doi}{doi:\discretionary{}{}{}\begingroup
  \urlstyle{rm}\Url}\fi
\providecommand{\eprint}[2][]{\url{#2}}

\bibitem{BHKPSZ21}
B.~Bertini, F.~Heidrich-Meisner, C.~Karrasch, T.~Prosen, R.~Steinigeweg and
  M.~\v{Z}nidari\v{c},
\newblock \emph{Finite-temperature transport in one-dimensional quantum lattice
  models},
\newblock Rev. Mod. Phys. \textbf{93}, 025003 (2021),
\newblock \doi{10.1103/RevModPhys.93.025003}.

\bibitem{NDMP22}
J.~{De~Nardis}, B.~Doyon, M.~Medenjak and M.~Panfil,
\newblock \emph{Correlation functions and transport coefficients in generalised
  hydrodynamics},
\newblock J. Stat. Mech.: Theor. Exp. p. P014002 ({2022}),
\newblock \doi{{10.1088/1742-5468/ac3658}}.

\bibitem{SPA11}
J.~Sirker, R.~G. Pereira and I.~Affleck,
\newblock \emph{Conservation laws, integrability, and transport in
  one-dimensional quantum systems},
\newblock Phys. Rev. B \textbf{83}, 035115 (2011),
\newblock \doi{10.1103/PhysRevB.83.035115}.

\bibitem{Sirker20}
J.~Sirker,
\newblock \emph{{Transport in one-dimensional integrable quantum systems}},
\newblock SciPost Phys. Lect. Notes p.~17 (2020),
\newblock \doi{10.21468/SciPostPhysLectNotes.17}.

\bibitem{Hlubeketal10}
N.~Hlubek, P.~Ribeiro, R.~Saint-Martin, A.~Revcolevschi, G.~Roth, G.~Behr,
  B.~B\"uchner and C.~Hess,
\newblock \emph{Ballistic heat transport of quantum spin excitations as seen in
  {$SrCuO_2$}},
\newblock Phys. Rev. B \textbf{81}, 020405(R) (2010),
\newblock \doi{10.1103/PhysRevB.81.020405}.

\bibitem{FKECSHBSGGBK13}
T.~Fukuhara, A.~Kantian, M.~Endres, M.~Cheneau, P.~Schau{\ss}, S.~Hild,
  D.~Bellem, U.~Schollw\"ock, T.~Giamarchi, C.~Gross, I.~Bloch and S.~Kuhr,
\newblock \emph{Quantum dynamics of a single, mobile spin impurity},
\newblock Nat. Phys. \textbf{9}, 235 (2013),
\newblock \doi{10.1038/nphys2561}.

\bibitem{FSEHCBG13}
T.~Fukuhara, P.~Schau{\ss}, M.~Endres, S.~Hild, M.~Cheneau, I.~Bloch and
  C.~Gross,
\newblock \emph{Microscopic observation of magnon bound states and their
  dynamics},
\newblock Nature \textbf{502}, 76 (2013),
\newblock \doi{10.1038/nature12541}.

\bibitem{HFSZKDBG14}
S.~Hild, T.~Fukuhara, P.~Schauß, J.~Zeiher, M.~Knap, E.~Demler, I.~Bloch and
  C.~Gross,
\newblock \emph{Far-from-equilibrium spin transport in {H}eisenberg quantum
  magnets},
\newblock Phys. Rev. Lett. \textbf{113}, 147205 (2014),
\newblock \doi{10.1103/PhysRevLett.113.147205}.

\bibitem{Jepsenetal20}
P.~N. Jepsen, J.~Amato-Grill, I.~Dimitrova, W.~W. Ho, E.~Demler and
  W.~Ketterle,
\newblock \emph{Spin transport in a tunable {H}eisenberg model realized with
  ultracold atoms},
\newblock Nature (London) \textbf{588}, 403 (2020),
\newblock \doi{10.1038/s41586-020-3033-y}.

\bibitem{Weietal21pp}
D.~Wei, A.~Rubio-Abadal, B.~Ye, F.~Machado, J.~Kemp, K.~Srakaew, S.~Hollerith,
  J.~Rui, S.~Gopalakrishnan, N.~Y. Yao, I.~Bloch and J.~Zeiher,
\newblock \emph{Quantum gas microscopy of {K}ardar-{P}arisi-{Z}hang
  superdiffusion} (2021), \eprint{2107.00038}.

\bibitem{KlSa02}
A.~Kl\"{u}mper and K.~Sakai,
\newblock \emph{The thermal conductivity of the spin-1/2 {XXZ} chain at
  arbitrary temperature},
\newblock J. Phys. A \textbf{35}, 2173 (2002),
\newblock \doi{10.1088/0305-4470/35/9/307}.

\bibitem{SaKl03}
K.~Sakai and A.~Kl\"umper,
\newblock \emph{Non-dissipative thermal transport in the massive regimes of the
  {XXZ} chain},
\newblock J. Phys. A \textbf{36}, 11617 (2003),
\newblock \doi{10.1088/0305-4470/36/46/006}.

\bibitem{Zotos17}
X.~Zotos,
\newblock \emph{A {TBA} approach to thermal transport in the {XXZ} {H}eisenberg
  model},
\newblock J. Stat. Mech.: Theor. Exp. \textbf{2017}, P103101 (2017),
\newblock \doi{10.1088/1742-5468/aa8c13}.

\bibitem{Suzuki85}
M.~Suzuki,
\newblock \emph{Transfer-matrix method and {Monte Carlo} simulation in quantum
  spin systems},
\newblock Phys. Rev. B \textbf{31}, 2957 (1985),
\newblock \doi{10.1103/PhysRevB.31.2957}.

\bibitem{SAW90}
J.~Suzuki, Y.~Akutsu and M.~Wadati,
\newblock \emph{A new approach to quantum spin chains at finite temperature},
\newblock J. Phys. Soc. Jpn. \textbf{59}, 2667 (1990),
\newblock \doi{10.1143/JPSJ.59.2667}.

\bibitem{Kluemper93}
A.~Kl\"umper,
\newblock \emph{Thermodynamics of the anisotropic spin-1/2 {H}eisenberg chain
  and related quantum chains},
\newblock Z. Phys. B \textbf{91}, 507 (1993),
\newblock \doi{10.1007/BF01316831}.

\bibitem{ShSu90}
B.~S. Shastry and B.~Sutherland,
\newblock \emph{Twisted boundary conditions and effective mass in
  {H}eisenberg-{I}sing and {H}ubbard rings},
\newblock Phys. Rev. Lett. \textbf{65}, 243 (1990),
\newblock \doi{10.1103/PhysRevLett.65.243}.

\bibitem{Mazur69}
P.~Mazur,
\newblock \emph{Non-ergodicity of phase functions in certain systems},
\newblock Physica \textbf{43}, 533 (1969),
\newblock \doi{10.1016/0031-8914(69)90185-2}.

\bibitem{ZNP97}
X.~Zotos, F.~Naef and P.~Prelovsek,
\newblock \emph{Transport and conservation laws},
\newblock Phys. Rev. B \textbf{55}, 11029 (1997),
\newblock \doi{10.1103/PhysRevB.55.11029}.

\bibitem{Prosen11}
T.~Prosen,
\newblock \emph{Open {XXZ} spin chain: Nonequilibrium steady state and a strict
  bound on ballistic transport},
\newblock Phys. Rev. Lett. \textbf{106}, 217206 (2011),
\newblock \doi{10.1103/PhysRevLett.106.217206}.

\bibitem{PereiraPasquier}
R.~G. Pereira, V.~Pasquier, J.~Sirker and I.~Affleck,
\newblock \emph{Exactly conserved quasilocal operators for the {XXZ} spin
  chain},
\newblock J. Stat. Mech p. P09037 (2014),
\newblock \doi{10.1088/1742-5468/2014/09/P09037}.

\bibitem{ProsenIlievski}
T.~Prosen and E.~Ilievski,
\newblock \emph{Families of quasilocal conservation laws and quantum spin
  transport},
\newblock Phys. Rev. Lett. \textbf{111}, 057203 (2013),
\newblock \doi{10.1103/PhysRevLett.111.057203}.

\bibitem{MKP17}
M.~Medenjak, C.~Karrasch and T.~Prosen,
\newblock \emph{Lower bounding diffusion constant by the curvature of {D}rude
  weight},
\newblock Phys. Rev. Lett. \textbf{119}, 080602 (2017),
\newblock \doi{10.1103/PhysRevLett.119.080602}.

\bibitem{IDMP18}
E.~Ilievski, J.~de~Nardis, M.~Medenjak and T.~Prosen,
\newblock \emph{Superdiffusion in one-dimensional quantum lattice models},
\newblock Phys. Rev. Lett. \textbf{121}, 230602 (2018),
\newblock \doi{10.1103/PhysRevLett.121.230602}.

\bibitem{FujimotoKawakami}
S.~Fujimoto and N.~Kawakami,
\newblock \emph{Drude-weight at finite temperatures for some nonintegrable
  quantum systems in one dimension},
\newblock Phys. Rev. Lett. \textbf{90}, 197202 (2003),
\newblock \doi{10.1103/PhysRevLett.90.197202}.

\bibitem{Zotos99}
X.~Zotos,
\newblock \emph{Finite temperature {D}rude weight of the one-dimensional
  spin-1/2 {H}eisenberg model},
\newblock Phys. Rev. Lett. \textbf{82}, 1764 (1999),
\newblock \doi{10.1103/PhysRevLett.82.1764}.

\bibitem{BFKS05}
J.~Benz, T.~Fukui, A.~Kl\"umper and C.~Scheeren,
\newblock \emph{On the finite temperature {D}rude weight of the anisotropic
  {H}eisenberg chain},
\newblock J. Phys. Soc. Jpn. \textbf{74}, 181 (2005),
\newblock \doi{10.1143/JPSJS.74S.181}.

\bibitem{Orbach58}
R.~Orbach,
\newblock \emph{Linear antiferromagnetic chain with anisotropic coupling},
\newblock Phys. Rev. \textbf{112}, 309 (1958),
\newblock \doi{10.1103/PhysRev.112.309}.

\bibitem{BVV83}
O.~Babelon, H.~J. de~Vega and C.~M. Viallet,
\newblock \emph{Analysis of the {B}ethe {A}nsatz equations of the {XXZ} model},
\newblock Nucl. Phys. B \textbf{220}, 13 (1983),
\newblock \doi{10.1016/0550-3213(83)90131-1}.

\bibitem{Gaudin71}
M.~Gaudin,
\newblock \emph{Thermodynamics of the {H}eisenberg-{I}sing ring for {$\Delta >
  1$}},
\newblock Phys. Rev. Lett. \textbf{26}, 1301 (1971),
\newblock \doi{doi.org/10.1103/PhysRevLett.26.1301}.

\bibitem{TaSu72}
M.~Takahashi and M.~Suzuki,
\newblock \emph{One-dimensional anisotropic {H}eisenberg model at finite
  temperatures},
\newblock Prog. Theor. Phys. \textbf{48}, 2187 (1972),
\newblock \doi{10.1143/PTP.48.2187}.

\bibitem{KSS98}
A.~Kuniba, K.~Sakai and J.~Suzuki,
\newblock \emph{Continued fraction {TBA} and functional relations in {XXZ}
  model at root of unity},
\newblock Nucl. Phys. B \textbf{525}, 597 (1998),
\newblock \doi{10.1016/S0550-3213(98)00300-9}.

\bibitem{TSK01}
M.~Takahashi, M.~Shiroishi and A.~Kl\"umper,
\newblock \emph{Equivalence of {TBA} and {QTM}},
\newblock J. Phys. A \textbf{34}, L187 (2001),
\newblock \doi{10.1088/0305-4470/34/13/105}.

\bibitem{ViWo84}
A.~Virosztek and F.~Woynarovich,
\newblock \emph{Degenerated ground states and excited states of the ${S} =
  \frac12$ anisotropic antiferromagnetic {H}eisenberg chain in the easy axis
  region},
\newblock J. Phys. A \textbf{17}, 3029 (1984),
\newblock \doi{10.1088/0305-4470/17/15/020}.

\bibitem{DGKS15a}
M.~Dugave, F.~G\"ohmann, K.~K. Kozlowski and J.~Suzuki,
\newblock \emph{On form factor expansions for the {XXZ} chain in the massive
  regime},
\newblock J. Stat. Mech.: Theor. Exp. p. P05037 (2015),
\newblock \doi{10.1088/1742-5468/2015/05/P05037}.

\bibitem{JiMi95}
M.~Jimbo and T.~Miwa,
\newblock \emph{Algebraic Analysis of Solvable Lattice Models},
\newblock American Mathematical Society (1995).

\bibitem{BCK96}
A.~H. Bougourzi, M.~Couture and M.~Kacir,
\newblock \emph{Exact two-spinon dynamical correlation function of the
  {H}eisenberg model},
\newblock Phys. Rev. B \textbf{54}, R12669 (1996),
\newblock \doi{10.1103/PhysRevB.54.R12669}.

\bibitem{BKM98}
A.~H. Bougourzi, M.~Karbach and G.~M\"uller,
\newblock \emph{Exact two-spinon dynamic structure factor of the
  one-dimensional $s = 1/2$ {H}eisenberg-{I}sing antiferromagnet},
\newblock Phys. Rev. B \textbf{57}, 11429 (1998),
\newblock \doi{10.1103/PhysRevB.57.11429}.

\bibitem{CaHa06}
J.-S. Caux and R.~Hagemans,
\newblock \emph{The 4-spinon dynamical structure factor of the {H}eisenberg
  chain},
\newblock J. Stat. Mech.: Theor. Exp. p. P12013 (2006),
\newblock \doi{10.1088/1742-5468/2006/12/P12013}.

\bibitem{CKSW12}
J.-S. Caux, H.~Konno, M.~Sorrell and R.~Weston,
\newblock \emph{Exact form-factor results for the longitudinal structure factor
  of the massless {XXZ} model in zero field},
\newblock J. Stat. Mech.: Theor. Exp. p. P01007 (2012),
\newblock \doi{10.1088/1742-5468/2012/01/P01007}.

\bibitem{CaMa05}
J.-S. Caux and J.~M. Maillet,
\newblock \emph{Computation of dynamical correlation functions of {H}eisenberg
  chains in a field},
\newblock Phys. Rev. Lett. \textbf{95}, 077201 (2005),
\newblock \doi{10.1103/PhysRevLett.95.077201}.

\bibitem{CMP08}
J.-S. Caux, J.~Mossel and I.~P. Castillo,
\newblock \emph{The two-spinon transverse structure factor of the gapped
  {H}eisenberg antiferromagnetic chain},
\newblock J. Stat. Mech.: Theor. Exp. p. P08006 (2008),
\newblock \doi{10.1088/1742-5468/2008/08/P08006}.

\bibitem{KKMST12}
N.~Kitanine, K.~K. Kozlowski, J.~M. Maillet, N.~A. Slavnov and V.~Terras,
\newblock \emph{Form factor approach to dynamical correlation functions in
  critical models},
\newblock J. Stat. Mech.: Theor. Exp. p. P09001 (2012),
\newblock \doi{10.1088/1742-5468/2012/09/P09001}.

\bibitem{Kozlowski18}
K.~K. Kozlowski,
\newblock \emph{On the thermodynamic limit of form factor expansions of
  dynamical correlation functions in the massless regime of the {XXZ} spin 1/2
  chain},
\newblock J. Math. Phys. \textbf{59}, 091408 (2018),
\newblock \doi{10.1063/1.5021892}.

\bibitem{Kozlowski21}
K.~K. Kozlowski,
\newblock \emph{On singularities of dynamic response functions in the massless
  regime of the {XXZ} spin-1/2 chain},
\newblock J. Math. Phys. \textbf{62}, 063507 (93pp) (2021),
\newblock \doi{10.1063/5.003651},
\newblock \eprint{math-ph 1811.06076}.

\bibitem{GKKKS17}
F.~G\"ohmann, M.~Karbach, A.~Kl\"umper, K.~K. Kozlowski and J.~Suzuki,
\newblock \emph{Thermal form-factor approach to dynamical correlation functions
  of integrable lattice models},
\newblock J. Stat. Mech.: Theor. Exp. p. 113106 (2017),
\newblock \doi{10.1088/1742-5468/aa9678}.

\bibitem{FaEs13}
M.~Fagotti and F.~H.~L. Essler,
\newblock \emph{Stationary behaviour of observables after a quantum quench in
  the spin-1/2 {H}eisenberg {XXZ} chain},
\newblock J. Stat. Mech.: Theor. Exp. p. P07012 (2013),
\newblock \doi{10.1088/1742-5468/2013/07/p07012}.

\bibitem{GKS20a}
F.~G\"ohmann, K.~K. Kozlowski and J.~Suzuki,
\newblock \emph{High-temperature analysis of the transverse dynamical two-point
  correlation function of the {XX} quantum-spin chain},
\newblock J. Math. Phys. \textbf{61}, 013301 (2020),
\newblock \doi{10.1063/1.5111039}.

\bibitem{GKS20b}
F.~G\"ohmann, K.~K. Kozlowski and J.~Suzuki,
\newblock \emph{Long-time large-distance asymptotics of the transversal
  correlation functions of the {XX} chain in the space-like regime},
\newblock Lett. Math. Phys. \textbf{110}, 1783 (2020),
\newblock \doi{10.1007/s11005-020-01276-y}.

\bibitem{CIKT92}
F.~Colomo, A.~G. Izergin, V.~E. Korepin and V.~Tognetti,
\newblock \emph{Correlators in the {H}eisenberg {XXO} chain as {F}redholm
  determinants},
\newblock Phys. Lett. A \textbf{169}, 243 (1992),
\newblock \doi{10.1016/0375-9601(92)90452-R}.

\bibitem{IIKS93b}
A.~R. Its, A.~G. Izergin, V.~E. Korepin and N.~Slavnov,
\newblock \emph{Temperature correlations of quantum spins},
\newblock Phys. Rev. Lett. \textbf{70}, 1704 (1993),
\newblock \doi{10.1103/PhysRevLett.70.1704}.

\bibitem{DGKS15b}
M.~Dugave, F.~G\"ohmann, K.~K. Kozlowski and J.~Suzuki,
\newblock \emph{Low-temperature spectrum of correlation lengths of the {XXZ}
  chain in the antiferromagnetic massive regime},
\newblock J. Phys. A \textbf{48}, 334001 (2015),
\newblock \doi{10.1088/1751-8113/48/33/334001}.

\bibitem{BGKS21a}
C.~Babenko, F.~G\"ohmann, K.~K. Kozlowski and J.~Suzuki,
\newblock \emph{A thermal form factor series for the longitudinal two-point
  function of the {H}eisenberg-{I}sing chain in the antiferromagnetic massive
  regime},
\newblock J. Math. Phys. \textbf{62}, 041901 (2021),
\newblock \doi{10.1063/5.0039863}.

\bibitem{BGKSS21}
C.~Babenko, F.~G\"ohmann, K.~K. Kozlowski, J.~Sirker and J.~Suzuki,
\newblock \emph{Exact real-time longitudinal correlation functions of the
  massive {XXZ} chain},
\newblock Phys. Rev. Lett. \textbf{126}, 210602 (2021),
\newblock \doi{10.1103/PhysRevLett.126.210602}.

\bibitem{DGKS16b}
M.~Dugave, F.~G\"ohmann, K.~K. Kozlowski and J.~Suzuki,
\newblock \emph{Thermal form factor approach to the ground-state correlation
  functions of the {XXZ} chain in the antiferromagnetic massive regime},
\newblock J. Phys. A \textbf{49}, 394001 (2016),
\newblock \doi{10.1088/1751-8113/49/39/394001}.

\bibitem{BGKS06}
H.~Boos, F.~G\"ohmann, A.~Kl\"umper and J.~Suzuki,
\newblock \emph{Factorization of multiple integrals representing the density
  matrix of a finite segment of the {H}eisenberg spin chain},
\newblock J. Stat. Mech.: Theor. Exp. p. P04001 (2006),
\newblock \doi{10.1088/1742-5468/2006/04/P04001}.

\bibitem{BGKS07}
H.~Boos, F.~G\"ohmann, A.~Kl\"umper and J.~Suzuki,
\newblock \emph{Factorization of the finite temperature correlation functions
  of the {XXZ} chain in a magnetic field},
\newblock J. Phys. A \textbf{40}, 10699 (2007),
\newblock \doi{10.1088/1751-8113/40/35/001}.

\bibitem{BJMST08a}
H.~Boos, M.~Jimbo, T.~Miwa, F.~Smirnov and Y.~Takeyama,
\newblock \emph{Hidden {G}rassmann structure in the {XXZ} model {II}: creation
  operators},
\newblock Comm. Math. Phys. \textbf{286}, 875 (2009),
\newblock \doi{10.1007/s00220-008-0617-z}.

\bibitem{JMS08}
M.~Jimbo, T.~Miwa and F.~Smirnov,
\newblock \emph{Hidden {G}rassmann structure in the {XXZ} model {III}:
  introducing {M}atsubara direction},
\newblock J. Phys. A \textbf{42}, 304018 (2009),
\newblock \doi{10.1088/1751-8113/42/30/304018}.

\bibitem{BoGo09}
H.~Boos and F.~G\"ohmann,
\newblock \emph{On the physical part of the factorized correlation functions of
  the {XXZ} chain},
\newblock J. Phys. A \textbf{42}, 315001 (2009),
\newblock \doi{10.1088/1751-8113/42/31/315001}.

\bibitem{WhWa63ch21}
E.~T. Whittaker and G.~N. Watson,
\newblock \emph{A {C}ourse of {M}odern {A}nalysis}, chap.~21,
\newblock Cambridge University Press, fourth edn.,
\newblock \doi{10.1017/CBO9780511608759} (1963).

\bibitem{EnSi12}
T.~Enss and J.~Sirker,
\newblock \emph{Lightcone renormalization and quantum quenches in
  one-dimensional {H}ubbard models},
\newblock New J. Phys. \textbf{14}, 023008 (2012),
\newblock \doi{10.1088/1367-2630/14/2/023008}.

\bibitem{EAS17}
T.~Enss, F.~Andraschko and J.~Sirker,
\newblock \emph{Many-body localization in infinite chains},
\newblock Phys. Rev. B \textbf{95}, 045121 (2017),
\newblock \doi{10.1103/PhysRevB.95.045121}.

\bibitem{DGKS16a}
M.~Dugave, F.~G\"ohmann, K.~K. Kozlowski and J.~Suzuki,
\newblock \emph{Asymptotics of correlation functions of the
  {H}eisenberg-{I}sing chain in the easy-axis regime},
\newblock J. Phys. A \textbf{49}, 07LT01 (2016),
\newblock \doi{10.1088/1751-8113/49/7/07LT01}.

\bibitem{BGKKW12}
M.~Brockmann, F.~G\"ohmann, M.~Karbach, A.~Kl\"umper and A.~Wei{\ss}e,
\newblock \emph{Absorption of microwaves by the one-dimensional spin-1/2
  {H}eisenberg-{I}sing magnet},
\newblock Phys. Rev. B \textbf{85}, 134438 (2012),
\newblock \doi{10.1103/PhysRevB.85.134438}.

\bibitem{BariAdler}
R.~A. Bari, D.~Adler and R.~V. Lange,
\newblock \emph{Electrical conductivity in narrow energy bands},
\newblock Phys. Rev. B \textbf{2}, 2898 (1970),
\newblock \doi{10.1103/PhysRevB.2.2898}.

\bibitem{KKM14}
C.~Karrasch, D.~M. Kennes and J.~E. Moore,
\newblock \emph{Transport properties of the one-dimensional {H}ubbard model at
  finite temperature},
\newblock Phys. Rev. B \textbf{90}, 155104 (2014),
\newblock \doi{10.1103/PhysRevB.90.155104}.

\bibitem{GGKS20}
F.~G\"ohmann, S.~Goomanee, K.~K. Kozlowski and J.~Suzuki,
\newblock \emph{Thermodynamics of the spin-1/2 {H}eisenberg-{I}sing chain at
  high temperatures: a rigorous approach},
\newblock Comm. Math. Phys. \textbf{377}, 623 (2020),
\newblock \doi{10.1007/s00220-020-03749-6}.

\bibitem{Kozlowski20app}
K.~K. Kozlowski,
\newblock \emph{On convergence of form factor expansions in the infinite volume
  quantum {S}inh-{G}ordon model in 1+1 dimensions},
\newblock preprint, arXiv:2007.01740,
\newblock \doi{10.48550/arXiv.2007.01740} (2020).

\bibitem{GaRa04}
G.~Gasper and M.~Rahman,
\newblock \emph{Basic Hypergeometric Series}, vol.~96 of \emph{Encyclopedia of
  Mathematics and its Applications},
\newblock Cambridge University Press, scd. edn.,
\newblock \doi{10.1017/CBO9780511526251} (2004).

\bibitem{Thebook}
F.~H.~L. Essler, H.~Frahm, F.~G\"ohmann, A.~Kl\"umper and V.~E. Korepin,
\newblock \emph{The {O}ne-{D}imensional {H}ubbard {M}odel},
\newblock Cambridge University Press,
\newblock \doi{10.1017/CBO9780511534843} (2005).

\bibitem{Goehmann21}
F.~G\"ohmann,
\newblock \emph{Introduction to solid state physics},
\newblock Lecture Notes (2021), \eprint{arXiv:2101.01780}.

\bibitem{GiamarchiShastri}
T.~Giamarchi and B.~S. Shastry,
\newblock \emph{Persistent currents in a one-dimensional ring for a disordered
  {H}ubbard model},
\newblock Phys. Rev. B \textbf{51}, 10915 (1995),
\newblock \doi{10.1103/PhysRevB.51.10915}.

\end{thebibliography}

\end{document}